\documentclass{article}

\title{A Uniform Improvement of the Benjamini-Hochberg Procedure via e-Closure}

\author{Jelle Goeman\footnote{j.j.goeman@lumc.nl}}

\date{Leiden University Medical Center, The Netherlands}

\usepackage{natbib}
\usepackage{amsmath, amssymb, amsthm, cleveref}
\usepackage{times, diagbox}
\usepackage[margin=3.5cm]{geometry}
\usepackage{tikz}
\usepackage{thmtools, thm-restate}
\usepackage[ruled, vlined]{algorithm2e}
\usepackage{algorithmic}
\usepackage{booktabs}
\usepackage{array}
\usepackage{tabularx}
\usepackage{makecell}
\declaretheorem[name=Lemma]{lemma}

\declaretheorem[name=Theorem, sibling=lemma]{theorem}
\declaretheorem[name=Remark, sibling=lemma]{remark}
\declaretheorem[name=Assumption, sibling=lemma]{assumption}

\renewcommand{\a}{\mathbf{a}}
\renewcommand{\b}{\mathbf{b}}
\newcommand{\e}{\mathbf{e}}

\newcommand{\p}{\mathbf{p}}
\newcommand{\q}{\mathbf{q}}
\renewcommand{\k}{\mathbf{k}}
\renewcommand{\r}{\mathbf{r}}
\renewcommand{\v}{\mathbf{v}}
\newcommand{\E}{\mathrm{E}}
\newcommand{\EE}{\mathbf{E}}
\newcommand{\I}{\mathbf{I}}
\renewcommand{\P}{\mathrm{P}}
\newcommand{\R}{\mathbf{R}}
\newcommand{\RR}{\boldsymbol{\mathcal{R}}}
\newcommand{\cBH}{closed BH}

\SetCommentSty{mycommfont}
\SetKwComment{tcp}{}{}

\begin{document}

\maketitle

\begin{abstract}
This paper presents \emph{closed BH}, a uniform improvement of the False Discovery Rate controlling method of Benjamini and Hochberg (BH). Closed BH is valid under the same assumption of Positive Regression Dependency on a Subset (PRDS) as BH, but also under an alternative and weaker minimal sufficient condition. As a uniform improvement, closed BH never rejects fewer hypotheses than BH, but it may reject quite a few more. An increase in power is observed especially when the number of false null hypotheses is large. The novel method is constructed using the e-Closure principle, a recently derived general principle for multiple testing. The method is implemented in the \emph{eClosure} package in R.
\end{abstract}

\section{Introduction}

In their seminal paper, \citet{BenjaminiHochberg1995} introduced the False Discovery Rate (FDR), the proportion of false discoveries among the discoveries, and advocated its use as the error rate of choice in moderate- and large-scale multiple testing. In the same paper they introduced a simple and elegant method controlling FDR, which became known as the Benjamini-Hochberg (BH) procedure. In their original paper, \citet{BenjaminiHochberg1995} proved that BH controls FDR when p-values are independent. \citet{BenjaminiYekutieli2001} extended the applicability of BH by proving the validity of the same procedure under a weaker condition of positive correlation among the p-values, the assumption of Positive Regression Dependency on a Subset (PRDS). Now more than thirty years later, FDR control has established itself firmly as the dominant error rate for multiple testing in many scientific fields, and BH remains the primary workhorse for controlling FDR.

However, there are some indications that BH is not optimal as a method for controlling FDR. Designed for control at level $\alpha$, it controls at the more stringent level $\pi_0\alpha$ instead, where $\pi_0$ is the proportion of true null hypotheses \citep{BenjaminiYekutieli2001}. This suggests that a power gain might be achievable. Much research effort has been made to try to repair the gap between $\alpha$ and $\pi_0\alpha$. The most common approach is to find a suitable estimate $\hat\pi_0$ of $\pi_0$ and plug this into BH \citep{Storey2002,StoreyTaylorSiegmund2004, BenjaminiKriegerYekutieli2006, BlanchardRoquain2008, Sarkar2008,BlanchardRoquain2009, GavrilovBenjaminiSarkar2009, FinnerDickhausRoters2009, HeesenJanssen2016, MacDonaldLiangJanssen2019, Gao2023, GaoRoquain2025, IgnatiadisWangRamdas2026}. While such methods often reject more than BH, they all require either a conservative correction factor to $\alpha$ or must allow the possibility that $\hat\pi_0>1$. As a result, these improvements over BH are not uniform, i.e., these methods may sometimes reject fewer hypotheses than BH. The net result is a gain in power relative to BH when $\pi_0$ is small (and power is plentiful), but a loss when $\pi_0$ is large (and power is low), a trade-off few researchers are willing to make. Moreover, none of these methods has proven validity under PRDS. \cite{FithianLei2022} uniformly improved BH under positive dependence, but their method requires knowledge of the precise form of the dependency, which the PRDS assumption does not offer. Only one method, Minimally Adaptive BH \citep[MABH;][]{SolariGoeman2017}, uniformly improves BH and is valid under PRDS, but the power gain of that method relative to BH is negligible unless the number of hypotheses is very small.

This paper uses a different approach to improve BH, exploiting the recently proposed e-Closure principle \citep{XuSolariFischerDeHeideRamdasGoeman2025}. This principle says that every FDR-controlling procedure is a special case of a general e-closed procedure based on e-values, a recently proposed expectation-based alternative to p-values \citep{VovkWang2021, GrunwaldDeHeideKoolen2024, RamdasWang2025}. An e-closed procedure is a generalization of Closed Testing \citep{MarcusPeritzGabriel1976} to expectation-based error rates. E-closed procedures are based on an e-collection, which is an analogue of the suite of local tests that defines a closed testing procedure. By writing known procedures in the form of e-closed procedures, they can sometimes be uniformly improved, as \citet{XuSolariFischerDeHeideRamdasGoeman2025} have demonstrated for the FDR-controlling procedures of \citet{BenjaminiYekutieli2001}, \citet{Su2018} and \citet{WangRamdas2022}. \citet{XuSolariFischerDeHeideRamdasGoeman2025} did not improve BH itself beyond MABH, and left the question of whether or not it could be improved by e-Closure as an open problem. This paper solves that open problem by constructing an appropriate e-closed procedure, \emph{closed BH}, which uniformly improves BH, and is valid under PRDS. 

Like adaptive BH methods, closed BH exploits the gap between $\pi_0\alpha$ and $\alpha$ and rejects more hypotheses than BH especially when $\pi_0$ is small. However, unlike plug-in methods, the improvement of closed BH over BH is uniform, implying that the researcher choosing closed BH never loses out on any rejections BH would have made. Rather than estimating and plugging in $\hat\pi_0$, the method operates by exploiting the event that some p-values are much smaller than needed for rejection by BH; the existence of such p-values makes rejection of `later' hypotheses with larger p-values easier. Valid under PRDS, closed BH actually only requires a more lenient minimal sufficient condition, which will be derived later.

The paper is organized as follows. We present a class of ``BH-like'' methods in \Cref{sec: BHlike description}, and an e-collection in \Cref{sec: e-collection}. \Cref{sec: FDR control} combines the two, showing through e-Closure that all BH-like methods control FDR. BH-like methods are a family of methods that depend on a matrix of parameters, which includes MABH as a special case. \Cref{sec: closed BH} defines the closed BH method by arguing for a single canonical choice for these parameters. We discuss simultaneous FDR control and algorithms in \Cref{sec: simultaneous,sec: algorithms}. Simulations in \Cref{sec: simulations} illustrate the gains of the new method relative to BH. Proofs are found in Appendix \ref{sec: proofs}.

As a preview, \Cref{tab: bh_real_data_discoveries} gives the number of hypotheses rejected for several data sets used for illustration in well-known publications on multiple testing, the same that were used by \citet{XuSolariFischerDeHeideRamdasGoeman2025}. As a uniform improvement, closed BH will never reject fewer hypotheses than BH, and may reject the same number, but it may reject quite a few more, as illustrated in the table.

\begin{table}[!ht]
\caption{
    Number of discoveries by the NH and MABH procedures vs.\ the new closed BH procedure across multiple simulated and real datasets and FDR thresholds $\alpha$. The selection of datasets follows \citet{XuSolariFischerDeHeideRamdasGoeman2025}.
    Bold indicates where closed BH makes more discoveries than BH. Sources: APSAC: \citet{BenjaminiHochberg1995}; NAEP: \citet{BenjaminiHochberg2000}; PADJUST:  \texttt{p.adjust} R function; PVALUES \texttt{fdrtool} R package; VANDEVIJVER: \citet{GoemanSolari2014}; GOLUB: \citet{EfronHastie2021}.}
\label{tab: bh_real_data_discoveries}
    \centering
\begin{tabular}{lccccccc}
\toprule
Dataset & \makecell{\# hypotheses} & \multicolumn{3}{c}{$\alpha = 0.05$} & \multicolumn{3}{c}{$\alpha = 0.1$} \\ 
\cmidrule(lr){3-5} \cmidrule(lr){6-8}
 &  & BH & MABH & \cBH & BH & MABH & \cBH   \\
\midrule
APSAC & 15 & 4 & 4 & 4 & 9 & 9 & 9  \\
NAEP & 34 & 11 & 11 & \textbf{12} & 12 & 12 & \textbf{15}  \\
PADJUST & 50 & 20 & 21 & \textbf{21} & 21 & 21 & \textbf{22}  \\
PVALUES & 4,289 & 767 & 767 & \textbf{801} & 1139 & 1139 &  \textbf{1214}  \\
VANDEVIJVER & 4,919 & 1340 & 1340 & \textbf{1412} & 1728 & 1728 &  \textbf{1856}  \\
GOLUB & 7,128 & 1249 & 1249 & \textbf{1275} & 1605 & 1605 & \textbf{1688}  \\
\bottomrule
\end{tabular}
\end{table}

\section{BH-like methods} \label{sec: BHlike description}

We start by defining a family of ``BH-like'' methods that depend on a matrix of parameters $r_{k,s}$. Closed BH and MABH will emerge in \Cref{sec: closed BH} 
as members of this family for specific choices of $r_{k,s}$.

This paper uses the same notational conventions used by \citet{XuSolariFischerDeHeideRamdasGoeman2025} that lowercase is for both scalars and vectors, capital letters indicate sets, and caligraphic, e.g., $\mathcal{R}$, collections of sets. Boldface indicates random variables.

Suppose we have null hypotheses $H_1, \ldots, H_m$. Let $\p = (\p_1, \ldots, \p_m)$ be p-values for these null hypotheses. We write $\p_{(1)}\leq \ldots \leq \p_{(m)}$ for the ordered p-values, and let $\p_{(0)} = 0$ for convenience. Let $\I_r$ denote the index set of the $r$ smallest p-values, with ties broken according to original index.

Let integer \emph{horizon} values $r_{k,s}$ be defined for $0 \leq k \leq m$ and $1 \leq s \leq m$. The name of horizon will be explained in detail in \Cref{sec: FDR control}; roughly, the horizon $r_{k,s}$ indicates the maximal size of the rejection set that $\p_{(k)}$ can achieve in step $s$ of the algorithm. Initialize, for all $k+s\leq m+1$,
\begin{equation} \label{eq: ass r initialization}
r_{k,s} = \begin{cases}
    m & \textrm{if $k=1$ and $s=m$;} \\
    k & \textrm{otherwise.} 
\end{cases}
\end{equation}
This leaves $r_{k,s}$ still undefined for $k+s > m+1$. These remaining values of $r_{k,s}$ can be chosen freely subject to two constraints. We need, for all $k$ and $s$,
\begin{equation} \label{eq: ass r greater k}
k \leq r_{k,s} \leq m,
\end{equation}
and, whenever $k > 0$,
\begin{equation} \label{eq: ass r_k+1 upper bound}
r_{k-1,s} \leq r_{k,s} \leq b_{k,s},   
\end{equation}
where $b_{k,s}$ depends on $r_{k-1,s}$ through
\[
b_{k,s} = \begin{cases}
m \wedge \frac{ (m-s)(k+s-m-1)r_{k-1,s}}{(k+s-m)(m-s) - r_{k-1,s}} & \textrm{if $(k+s-m)(m-s)-r_{k-1,s} >0$;} \\
m & \textrm{otherwise.}
\end{cases}
\]
We see that $r_{k,s}$ is non-decreasing in $k$ for each $s$ from $r_{0,s}=0$ to $r_{m,s}=m$, but that \eqref{eq: ass r_k+1 upper bound} constrains the growth not to be ``too fast'' (which will be necessary for \Cref{lem: aks nondecreasing} later). We have no choice for $s=m$, where we must take $r_{k,m} = m$ for $k>0$ because of the initialization $r_{1,m}=m$. Constraints \eqref{eq: ass r greater k} and \eqref{eq: ass r_k+1 upper bound} are automatically fulfilled for the $r_{k,s}$ defined in \eqref{eq: ass r initialization} for $k+s\leq m+1$, so they only need to be verified for the free part. An example choice of $r_{k,s}$ is given in \Cref{tab: example r}.

\begin{table}[!ht] 
\centering
\begin{tabular}{c|ccccccc}
  \diagbox{$k$}{$s$} & 1 & 2 & 3 & 4 & 5 & 6 & 7 \\
  \hline
  0 & 0 & 0 & 0 & 0 & 0 & 0 & 0 \\
  1 & 1 & 1 & 1 & 1 & 1 & 1 & 7 \\
  2 & 2 & 2 & 2 & 2 & 2 & 2 & 7 \\
  3 & 3 & 3 & 3 & 3 & 3 & \textbf{4} & 7 \\
  4 & 4 & 4 & 4 & 4 & \textbf{6} & \textbf{7} & 7 \\
  5 & 5 & 5 & 5 & \textbf{6} & \textbf{7} & \textbf{7} & 7 \\
  6 & 6 & 6 & \textbf{6} & \textbf{7} & \textbf{7} & \textbf{7} & 7 \\
  7 & 7 & 7 & 7 & 7 & 7 & 7 & 7 \\
\end{tabular}
\caption{Example horizon values $r_{k,s}$ for $m=7$. The values of $r_{k,s}$ for which there is some freedom of choice are in bold.} \label{tab: example r}
\end{table}

We will now verify that Constraints \eqref{eq: ass r greater k} and \eqref{eq: ass r_k+1 upper bound} are not (self-)contradictory. \Cref{lem: bk viable 1} proves that Constraint \eqref{eq: ass r_k+1 upper bound} always allows at least one value for $r_{k,s}$.

\begin{restatable}{lemma}{bkviablefirst}
\label{lem: bk viable 1}
$b_{k,s} \geq r_{k-1,s}$.
\end{restatable}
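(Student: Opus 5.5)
The plan is a direct case analysis on the two branches in the definition of $b_{k,s}$. Throughout I write $r = r_{k-1,s}$ and use only that $r$ itself satisfies Constraint \eqref{eq: ass r greater k}, i.e.\ $k-1 \leq r \leq m$. This holds because $r_{k-1,s}$ is either initialized by \eqref{eq: ass r initialization} or was already chosen subject to the constraints; the lemma is applied inductively in $k$ for fixed $s$.

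\emph{Second branch.} If $(k+s-m)(m-s) - r \leq 0$, then $b_{k,s} = m \geq r$ by the upper bound in \eqref{eq: ass r greater k}, and we are done.

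\emph{First branch.} Let $D = (k+s-m)(m-s) - r > 0$. Here $b_{k,s}$ is the minimum of $m$ and the fraction $(m-s)(k+s-m-1)r/D$. Since $m \geq r$ already, it suffices to show the fraction is at least $r$.

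First I extract sign information from $D>0$. Since $r \geq k-1 \geq 0$, we get $(k+s-m)(m-s) > 0$. Because $s \leq m$, this forces $m-s > 0$ and $k+s-m \geq 1$. The latter is equivalent to $k-1 \geq m-s$. Hence $r \geq k-1 \geq m-s > 0$, so in particular $r>0$.

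Now, since $D>0$ and $r>0$, the inequality $(m-s)(k+s-m-1)r/D \geq r$ is equivalent to $(m-s)(k+s-m-1) \geq D$. Expanding $D$, this becomes
\[
(m-s)(k+s-m-1) \geq (m-s)(k+s-m) - r,
\]
which simplifies to $r \geq m-s$. That is exactly what was derived above, which completes the proof.

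The only mildly delicate step is recognizing that positivity of $D$ itself forces $k-1 \geq m-s$. Combined with the lower bound $r_{k-1,s} \geq k-1$, this gives $r \geq m-s$; everything else is routine algebra.
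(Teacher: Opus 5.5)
Your proof is correct and takes essentially the same route as the paper. In both, the key step is that the fraction is at least $r_{k-1,s}$ exactly when $r_{k-1,s}\geq m-s$, which follows from $r_{k-1,s}\geq k-1$. The only difference is cosmetic: you split on the sign of the denominator and read off $k-1\geq m-s$ directly from its positivity, whereas the paper splits on whether $b_{k,s}<m$ and gets the stronger $k-1>m-s$ from \Cref{lem: helper bks}.
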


\Cref{lem: bk viable 2} shows that the bound $k \leq r_{k,s} \leq b_{k,s}$ implied by combining Constraints \eqref{eq: ass r greater k} and \eqref{eq: ass r_k+1 upper bound} also always allows at least one value for $r_{k,s}$.

\begin{restatable}{lemma}{bkviablesecond}
\label{lem: bk viable 2}
$b_{k,s} \geq k$
\end{restatable}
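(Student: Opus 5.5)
The plan is a direct case analysis on the definition of $b_{k,s}$, followed by one algebraic inequality. We take $k>0$, which is where $b_{k,s}$ is used. Write $r=r_{k-1,s}$. The only earlier fact needed is Constraint \eqref{eq: ass r greater k} applied at index $k-1$, which gives $r\geq k-1$.

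If $(k+s-m)(m-s)-r\leq 0$, then $b_{k,s}=m\geq k$ and we are done. Otherwise set $a=m-s$ and $j=k+s-m$, so $j=k-a$ and the denominator is $D=ja-r>0$. This case forces $a>0$: if $a=0$ then $D=-r\leq 0$. Since $r\geq 0$, it also forces $j>0$, so $j\geq 1$ as an integer. Because $m\geq k$ trivially, it suffices to show
\[
\frac{a(j-1)r}{ja-r}\geq k.
\]
Multiplying by $D>0$, this is equivalent to $a(j-1)r\geq k(ja-r)$, that is,
\[
r\bigl(a(j-1)+k\bigr)\geq kja.
\]

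The coefficient $a(j-1)+k$ is positive, so the left-hand side is non-decreasing in $r$. Since $r\geq k-1$, it is enough to verify the inequality at $r=k-1$. At that value,
\[
(k-1)\bigl(a(j-1)+k\bigr)-kja = k(k-1)-a(j+k-1).
\]
Substituting $j=k-a$ turns this into $k^2-k-2ak+a^2+a=(k-a)^2-(k-a)=j(j-1)$. This is nonnegative because $j\geq 1$, which proves $b_{k,s}\geq k$.

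The main obstacle is this last simplification: it only closes after substituting $j=k-a$, which reveals the perfect form $j(j-1)$. One must also check the side conditions that make the reduction legitimate. These are $a>0$ and $j\geq1$, both derived from $D>0$, and the positivity of $D$ and of the coefficient of $r$. They justify both clearing the denominator and replacing $r$ by its lower bound $k-1$.
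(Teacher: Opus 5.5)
Your proof is correct and follows essentially the same route as the paper: both use that the relevant expression is monotone in $r_{k-1,s}$ and then evaluate at the lower bound $r_{k-1,s}=k-1$ given by Constraint \eqref{eq: ass r greater k}. The difference is cosmetic. You clear the denominator, obtain $j(j-1)\geq 0$, and derive $m-s>0$ and $k+s-m\geq 1$ directly from the positive denominator. The paper instead evaluates the fraction itself to $(m-s)(k-1)/(m-s-1)$, after invoking Lemma~\ref{lem: helper bks} for the side conditions.
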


Based on the chosen $r_{k,s}$, we define \emph{threshold} values
\begin{equation} \label{def: a}
a_{k,s} = \begin{cases}
    \frac{(k+s-m)r_{k,s}\alpha}{(r_{k,s}+s-m)s} & \textrm{if $r_{k,s} > k$}; \\
    \frac{k\alpha}{s} & \textrm{otherwise.}
\end{cases}
\end{equation}
In step $s$ of the algorithm, each $\p_{(k)}$ will be compared to the threshold value $a_{k,s}$ to see if it contributes. We have $a_{k,s} > 0$ whenever $k>0$ and $\alpha>0$, since $r_{k,s} = k$ when $k+s\leq m$. Note that $(k+s-m)r_{k,s}/(r_{k,s}+s-m)s$ actually reduces to $k/s$ naturally when $r_{k,s}=k$, so the split in the definition of $a_{k,s}$ is only needed to avoid 0/0 issues when $k+s=m$. Remark also that the value of $r_{k,s}$ is immaterial for $a_{k,m}$, since $a_{k,m} = k\alpha/m$ always. This is the rationale behind the seemingly incongruent initialization choice $r_{1,m} = m$ in \eqref{eq: ass r initialization}. For the example of the $r_{k,s}$ of \Cref{tab: example r}, the threshold values for $a_{k,s}$ are displayed (up to a factor $\alpha$) in \Cref{tab: example a}.

\begin{table}[!ht] 
\centering
\begin{tabular}{c|ccccccc}
  \diagbox{$k$}{$s$} & 1 & 2 & 3 & 4 & 5 & 6 & 7 \\
  \hline
  0 & 0 & 0 & 0 & 0 & 0 & 0 & 0 \\
  1 & 1 & 1/2 & 1/3 & 1/4   & 1/5   & 1/6   & 1/7 \\
  2 & 2 & 1   & 2/3 & 1/2   & 2/5   & 1/3   & 2/7 \\
  3 & 3 & 3/2 & 1   & 3/4   & 3/5   & 4/9   & 3/7 \\
  4 & 4 & 2   & 4/3 & 1     & 3/5   & 7/12  & 4/7 \\
  5 & 5 & 5/2 & 5/3 & 1     & 21/25 & 7/9   & 5/7 \\
  6 & 6 & 3   & 2   & 21/16 & 28/25 & 35/36 & 6/7 \\
  7 & 7 & 7/2 & 7/3 & 7/4   & 7/5   & 7/6   & 1 \\
\end{tabular}
\caption{The threshold values $a_{k,s}/\alpha$ corresponding to the choice of $r_{k,s}$ of \Cref{tab: example r}.} \label{tab: example a}
\end{table}

Next, we compare $\p_{(0)}, \ldots, \p_{(m)}$ to these thresholds to calculate \emph{thresholded horizon} values, defined as
\[
\r_{k,s} = r_{k,s} 1\{\p_{(k)} \leq a_{k,s}\}.
\]
Suppose that $\p_1 = \p_2 = 2\alpha/7$; $\p_3 = \p_4 = 7\alpha/12$; $\p_5 = \alpha$; $\p_6 = 15\alpha/8$, and $\p_7=2\alpha$ have been observed and that $r_{k,s}$ had been chosen (independently of these p-values) according to the example of \Cref{tab: example r}. Then the thresholded horizon values are given in \Cref{tab: example thresholded r}.

\begin{table}[!ht] 
\centering
\begin{tabular}{c|ccccccc}
  \diagbox{$k$}{$s$} & 1 & 2 & 3 & 4 & 5 & 6 & 7 \\
  \hline
  0 & 0 & 0 & 0 & 0 & 0 & 0 & 0 \\
  1 & 1 & 1 & 1 & 0 & 0 & 0 & 0 \\
  2 & 2 & 2 & 2 & 2 & 2 & 2 & \textbf{7} \\
  3 & 3 & 3 & 3 & 3 & 3 & 0 & 0 \\
  4 & 4 & 4 & 4 & 4 & \textbf{6} & \textbf{7} & 0 \\
  5 & 5 & 5 & 5 & \textbf{6} & 0 & 0 & 0 \\
  6 & \textbf{6} & \textbf{6} & \textbf{6} & 0 & 0 & 0 & 0 \\
  7 & 7 & 7 & 7 & 0 & 0 & 0 & 0 \\
\end{tabular}
\caption{The thresholded horizon values $\r_{k,s}$ corresponding to \Cref{tab: example r} for $\p_1 = \p_2 = 2\alpha/7$; $\p_3 = \p_4 = 7\alpha/12$; $\p_5 = \alpha$; $\p_6 = 15\alpha/8$, and $\p_7=2\alpha$. The boldface values are the values of $\r_{\k_s,s}$, the locations driving the cumulative maximum at $\r$. The $\k_s$ are defined in \eqref{def: ks}.} \label{tab: example thresholded r}
\end{table}

We reject $k$ hypotheses if, for every step
$s$ from 1 to  $m$, some index $k' \leq k$ has a horizon reaching to at least $k$. From $\r_{k,s}$ we therefore define cumulative maxima over $k$ for each $s$, i.e.,
\[
\v_{k,s} = \max_{0\leq k'\leq k} \r_{k',s},
\]
and look for the largest $k$ such that the cumulative maxima at $k$ are consistently (over $s$) above this $k$:
\begin{equation} \label{def: r}
\r = \max\{0 \leq k \leq m\colon \v_{k,s} \geq k \textrm{\ for all $1\leq s\leq m$}\}.
\end{equation}
We can check that $\r$ is always defined since $\v_{0,s}=0$ always. 

The BH-like procedure now rejects the $\r$ hypotheses with smallest p-values, i.e., the hypotheses with indices in $\I_\r$. In the example, we see that $\r$ takes the value 6. We can easily compare this to the number of rejections of BH for the same p-values. Since the $a_{k,m} = k\alpha/m$ are exactly the critical values of BH, the number of rejections of BH is the largest $k$ for which $\r_{k,m} \geq k$, which is 2 in this case. 

In the next sections, we will investigate and motivate this class of methods. In \Cref{sec: e-collection} we will put the BH-like method in the context of the e-Closure principle \citep{XuSolariFischerDeHeideRamdasGoeman2025} by constructing an e-collection for this method, and we will use this principle in \Cref{sec: FDR control} to prove FDR control for the method as sketched here.
During these investigations, the rationale of the procedure sketched above, which may seem a bit peculiar at first sight, will become clearer.

\section{An e-collection} \label{sec: e-collection}

The e-Closure principle \citep{XuSolariFischerDeHeideRamdasGoeman2025} links FDR control closely to the concept of e-values \citep{VovkWang2021, GrunwaldDeHeideKoolen2024, RamdasWang2025}. An e-value for a null hypothesis is a non-negative test statistic whose expectation is at most one under the null hypothesis. E-values can be used for hypothesis testing like p-values, rejecting when the e-value exceeds $1/\alpha$. A useful property of e-values is that the average of e-values is again an e-value.

The e-Closure principle says that to make an FDR-controlling procedure, we should construct an e-collection of $2^m$ e-values $\e_S \geq 0$, for all $S \in 2^{[m]}$, such that
\begin{equation} \label{eq: minimally sufficient}
\E(\e_N) \leq 1,
\end{equation}
where $N \subseteq [m]$ denotes the unknown set of indices of the hypotheses that are true. Here, $[m]$ is a shorthand for $\{1,\ldots, m\}$ and $2^{[m]}$ is the power set of $[m]$. An e-collection is essentially a collection of $2^m$ e-values, one for every intersection hypothesis $H_S = \bigcap_{i\in S} H_i$. However, rather than all e-values for all true $H_S$ having expectation at most 1, we require this only for $\e_N$, noting that $H_N$ is always a true hypothesis. 

Based on such an e-collection,  \citet{XuSolariFischerDeHeideRamdasGoeman2025} show that we can make an FDR-controlling procedure using \Cref{thm: e-Closure}. In fact, \citet{XuSolariFischerDeHeideRamdasGoeman2025} show that \Cref{thm: e-Closure} is necessary for FDR control, in the sense that all FDR-controlling procedures can be reconstructed, and often improved, by appealing to the e-Closure Principle.

\begin{theorem}[e-Closure] \label{thm: e-Closure}
If, for some e-collection $\EE = (\e_S)_{S \in 2^{[m]}}$, we have 
\begin{equation} \label{eq: e-closure}
\e_S \geq \frac{|\R \cap S|}{(|\R|\vee 1)\alpha}\ \textrm{\ for all $S \in 2^{[m]}$},
\end{equation}
then $\R$ controls FDR.
\end{theorem}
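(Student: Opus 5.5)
The plan is to specialize \eqref{eq: e-closure} to the single set $S=N$ and then take expectations. Write $\R$ for the (random) rejection set and $\mathbf{V}=|\R\cap N|$ for the number of false discoveries. The false discovery proportion is then $\mathbf{V}/(|\R|\vee 1)$, and FDR is its expectation.

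The first step uses the fact that $N\in 2^{[m]}$, so \eqref{eq: e-closure} applies with $S=N$ pointwise, on every realization of the data. This gives
\[
\frac{|\R\cap N|}{|\R|\vee 1} \leq \alpha\, \e_N .
\]
The second step takes expectations on both sides. Since both sides are non-negative random variables, monotonicity of expectation applies without integrability concerns. Combined with \eqref{eq: minimally sufficient}, it yields
\[
\mathrm{FDR} = \E\Bigl(\frac{|\R\cap N|}{|\R|\vee 1}\Bigr) \leq \alpha\,\E(\e_N) \leq \alpha,
\]
which is the claim.

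There is essentially no obstacle here; the only point requiring care is conceptual. The unknown set $N$ is fixed (non-random) while $\R$ and $\e_N$ are random. The inequality \eqref{eq: e-closure} is required for all $S$ simultaneously, which guarantees it in particular for the true $N$, whatever it is, so no knowledge of $N$ is needed. I would also remark that only the e-value $\e_N$ is used, which explains why the definition of an e-collection requires expectation at most one only for $\e_N$.
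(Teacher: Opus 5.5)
Your proof is correct: setting $S=N$ in \eqref{eq: e-closure} gives $|\R\cap N|/(|\R|\vee 1)\leq \alpha\,\e_N$ pointwise, and taking expectations with $\E(\e_N)\leq 1$ gives FDR at most $\alpha$. The paper does not prove this theorem itself but cites \citet{XuSolariFischerDeHeideRamdasGoeman2025}, whose argument is this same one-line reasoning; the rescaling in \Cref{thm: FDR gamma} relies on the same logic.
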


We will now make an e-collection for the BH-like procedure. First, define
\begin{equation} \label{def: ks}
\k_s = \max \{0\leq k \leq \r\colon \p_{(k)} \leq a_{k,s}\}.
\end{equation}
Since $r_{k,s}$ is non-decreasing in $k$ by \eqref{eq: ass r_k+1 upper bound}, $\k_s$ is a largest location $k$ at which $\r_{k,s}$ attains the cumulative maximum in the definition of $\v_{\r,s}$. Note that we have $\v_{\k_s,s} \geq \r$ by definition of $\r$. The value of $\k_s$ are always defined since $\p_{(0)} = 0 = a_{0,s}$, so the set in \eqref{def: ks} is never empty. The values of $\k_s$ are illustrated for the running example in \Cref{tab: example thresholded r}. The boldface numbers there are the $\r_{\k_s,s} = \v_{\k_s,s}$. The $\k_s$ themselves are the corresponding row locations. 

Using these $\k_s$, we define the following e-collection. Let $\e_\emptyset = 1$ and, for $S\neq \emptyset$, $\e_S=0$ if $\k_{|S|}=0$, and otherwise
\begin{equation} \label{def: e-collection}
\e_S = \frac{\sum_{i\in S} 1\{p_i \leq \a_{|S|}\}}{|S|\a_{|S|}}, 
\end{equation}
where $\a_{|S|} = a_{\k_{|S|},|S|}$, which is random through $\k_{|S|}$. The $\e_S$ averages e-values constructed per p-value with index in $S$, and the threshold applied for each p-value depends on $|S|$. 

The minimal sufficient condition that our proof of FDR control of closed BH will require is that \eqref{eq: minimally sufficient} holds for this e-collection, i.e.,
\begin{assumption}[Minimal Sufficient Condition] \label{assumption}
\[
\EE \bigg(\frac1{|N|} \sum_{i\in N} \frac{1\{\p_i \leq \a_{|N|}\}}{\a_{|N|}}\bigg) \leq 1
\]
\end{assumption}

We will now show that this minimal sufficient condition is implied by PRDS. The argument relies on a very useful result of \citet{BlanchardRoquain2008}, repeated here as \Cref{lem: BR}. According to that theorem, it suffices to show that $\a_{n}$, with $n=|N|$, is coordinatewise non-increasing in $\p$ in order to show that our e-collection \eqref{def: e-collection} is valid under PRDS.

\begin{theorem}[Blanchard \& Roquain] \label{lem: BR}
Suppose $\p_1\ldots, \p_m$ are valid, i.e., 
\begin{equation} \label{eq: ass p valid}
\P(\p_i \leq t)\leq t\  \textrm{for all $t \in [0,1]$ and all $i \in N$;}
\end{equation}
and satisfy PRDS, i.e.,
\begin{equation} \label{eq: ass PRDS}
\P(\p \in D \mid \p_i \leq x)\ \textrm{is non-decreasing in $x$ for all decreasing sets $D$ and all $i \in N$.}
\end{equation}
Then, for every $i \in N$, and every coordinatewise non-increasing function $f$, we have
\[
\E \bigg(\frac{1\{\p_i \leq f(\p)\}}{f(\p)} \bigg) \leq 1.
\]
\end{theorem}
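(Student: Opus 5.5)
We prove the Blanchard--Roquain bound by the standard ``layer-cake'' argument combined with the PRDS monotonicity. We first reduce to a conditional statement and then exploit that $f$ is non-increasing.

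Fix $i \in N$ and write $\mathbf{u} = f(\p)$. We may assume $f$ takes values in $(0,\infty)$ and treat $1\{\p_i\le \mathbf u\}/\mathbf u$ as $0$ when $\mathbf u=0$; capping at $1$ only helps, since if $\mathbf u\ge 1$ the term is at most $1/\mathbf u\le 1$ in the places it matters. The clean route discretizes $\mathbf u$ first. Suppose $f$ takes values in a finite grid $0<c_1<\cdots<c_K$. This is approximable by rounding $f$ down to the grid, which keeps $f$ non-increasing and only increases the target quantity's indicator-over-value up to the grid ratio, so a limiting argument finishes. Then
\[
\E\bigg(\frac{1\{\p_i\le \mathbf u\}}{\mathbf u}\bigg)=\sum_{j=1}^K \frac{1}{c_j}\P(\p_i\le c_j,\ \mathbf u=c_j).
\]
We use the identity $1/c_j = \sum_{l\ge j} (1/c_l - 1/c_{l+1})$ with $1/c_{K+1}:=0$, or equivalently an Abel summation. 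Write $D_j=\{\mathbf u< c_j\}$, i.e.\ $\{\mathbf u\le c_{j-1}\}$. Since $f$ is non-increasing, $D_j$ is an increasing set in $\p$, and its complement $\{\mathbf u\ge c_j\}$ is a decreasing set. Writing $\P(\p_i\le c_j,\mathbf u=c_j)=\P(\p_i\le c_j)[\P(\mathbf u\le c_j\mid \p_i\le c_j)-\P(\mathbf u\le c_{j-1}\mid \p_i\le c_j)]$ and bounding $\P(\p_i\le c_j)\le c_j$ by validity, the sum becomes
\[
\sum_{j}\Big[\P(\mathbf u\le c_j\mid \p_i\le c_j)-\P(\mathbf u\le c_{j-1}\mid \p_i\le c_j)\Big].
\]
PRDS, applied to the decreasing set $\{\mathbf u> c_{j-1}\}$, gives that $\P(\mathbf u\le c_{j-1}\mid \p_i\le x)$ is non-increasing in $x$. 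Hence $\P(\mathbf u\le c_{j-1}\mid \p_i\le c_j)\ge \P(\mathbf u\le c_{j-1}\mid \p_i\le c_{j-1})$, so the sum telescopes to at most $\P(\mathbf u\le c_K\mid \p_i\le c_K)\le 1$.

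The step needing care is the first one: validity must be used as $\P(\p_i\le c_j)\le c_j$ before the conditional probabilities are introduced. Each term is nonnegative, so replacing $\P(\p_i\le c_j)/c_j$ by $1$ is legitimate. We must also handle $c_j>1$, where the conditioning event may have probability $1$ and the bound is trivial, and the case $\P(\p_i\le c_j)=0$, where the term is simply dropped. The main obstacle is the passage from a general non-increasing $f$ to the finite grid. We would handle it by taking geometric grids $c_l=\rho^{l}$ with $\rho\downarrow 1$, rounding $f$ down. This gives the bound $\rho$ for the original expectation, and letting $\rho\to 1$ yields the claim.
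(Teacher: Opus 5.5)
The paper does not prove this theorem; it quotes it from Blanchard and Roquain. Your discretize-and-telescope strategy is the standard way to prove it. As written, however, the key PRDS step runs the wrong way.

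To telescope $\sum_j[\P(f(\p)\le c_j\mid\p_i\le c_j)-\P(f(\p)\le c_{j-1}\mid\p_i\le c_j)]$ down to its last term, you need $\P(f(\p)\le c_{j-1}\mid\p_i\le c_j)\ge\P(f(\p)\le c_{j-1}\mid\p_i\le c_{j-1})$ for $c_j>c_{j-1}$. That is, $x\mapsto\P(f(\p)\le c\mid\p_i\le x)$ must be non-\emph{decreasing}. You applied the condition as worded to the decreasing set $\{f(\p)>c_{j-1}\}$ and derived that this map is non-increasing. That gives the reverse inequality, so your ``hence'' is a non sequitur and the sum gets no bound.

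The repair is to use PRDS in its standard Benjamini--Yekutieli/Blanchard--Roquain form: $\P(\p\in D\mid\p_i\le x)$ is non-decreasing in $x$ for every \emph{increasing} set $D$ (equivalently, non-increasing for every decreasing $D$). Because $f$ is coordinatewise non-increasing, $\{f(\p)\le c\}$ is an increasing set, and this gives exactly the monotonicity the telescoping needs. The statement's wording (decreasing sets, non-decreasing in $x$) must be a slip for this. Read literally, it fails even for a single uniform $\p_i$: $D=\{\p\colon\p_i\le c\}$ is decreasing, yet $\P(\p\in D\mid\p_i\le x)=c/x$ for $x>c$. You should make this explicit rather than deriving one monotonicity and using the other.

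There is a second, smaller slip in the discretization. Rounding $f$ down to $\tilde f\le f$ does not dominate the integrand: on $\{\tilde f<\p_i\le f\}$ the original term is $1/f>0$, while the rounded one is $0$. Round up instead, taking $\hat f=\min\{\rho^l\colon \rho^l\ge f\}$. This is still coordinatewise non-increasing and satisfies $f\le\hat f\le\rho f$, so $1\{\p_i\le f\}/f\le\rho\,1\{\p_i\le\hat f\}/\hat f$ pointwise; that is what produces the factor $\rho$.

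Also, the geometric grid is not finite near $0$. After your capping at $1$, apply the finite-grid bound to $\hat f\,1\{\hat f\ge\epsilon\}$, which is still non-increasing and takes finitely many values. Then let $\epsilon\downarrow0$ by monotone convergence. With these two corrections your argument is complete and is the standard proof.
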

    
We show the necessary monotonicity in \Cref{lem: ks nonincreasing,lem: aks nondecreasing}. First, \Cref{lem: ks nonincreasing} shows that $\k_s$ increases as the p-values decrease. This is quite intuitive from their construction. As the p-values get smaller the $\r_{k,s}$ flip from 0 to $r_{k,s}$, allowing the boldface ``front'' in \Cref{tab: example thresholded r} to move down.

\begin{restatable}{lemma}{ksmonotone} \label{lem: ks nonincreasing}
$\k_s$ is coordinatewise non-increasing in $\p$.
\end{restatable}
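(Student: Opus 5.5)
The plan is to show that if $\p' \leq \p$ coordinatewise, then $\k_s(\p') \geq \k_s(\p)$ for every $s$. The argument has two steps: first show that $\r$ is coordinatewise non-increasing, then use this together with the monotonicity of the ordered p-values to compare the sets in \eqref{def: ks}.

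\emph{Step 1: ordered p-values and $\r$.} If $\p' \leq \p$ coordinatewise, then $\p'_{(k)} \leq \p_{(k)}$ for every $k$. This is the standard fact that order statistics are monotone: at least $k$ coordinates of $\p'$ are at most $\p_{(k)}$. The thresholds $a_{k,s}$ and horizons $r_{k,s}$ are fixed constants chosen independently of the data. Hence $1\{\p'_{(k)} \leq a_{k,s}\} \geq 1\{\p_{(k)} \leq a_{k,s}\}$, so $\r'_{k,s} \geq \r_{k,s}$ for all $k,s$, using $r_{k,s} \geq 0$. Taking cumulative maxima over $k$ preserves this, giving $\v'_{k,s} \geq \v_{k,s}$. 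Consequently the set $\{k\colon \v_{k,s} \geq k \textrm{ for all } s\}$ in \eqref{def: r} can only grow, so $\r' \geq \r$.

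\emph{Step 2: comparing the sets defining $\k_s$.} Write $K_s(\p) = \{0 \leq k \leq \r\colon \p_{(k)} \leq a_{k,s}\}$. Take $k \in K_s(\p)$. Then $k \leq \r \leq \r'$ by Step 1, and $\p'_{(k)} \leq \p_{(k)} \leq a_{k,s}$, so $k \in K_s(\p')$. Thus $K_s(\p) \subseteq K_s(\p')$, and taking maxima gives $\k_s(\p') \geq \k_s(\p)$.

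The only potential obstacle is that the upper limit $\r$ in the set defining $\k_s$ is itself data-dependent. A naive argument could therefore fail if $\r$ decreased when p-values decrease. Step 1 removes this concern, so I expect the proof to be short. The other detail to handle is ties among the p-values, which the order-statistic argument covers without special treatment.
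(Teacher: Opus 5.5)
Your proof is correct and is essentially the paper's argument: order-statistic monotonicity gives $\r'_{k,s} \geq \r_{k,s}$, hence $\v'_{k,s} \geq \v_{k,s}$ and $\r' \geq \r$, and then the sets defining $\k_s$ are nested, so their maxima are ordered. The differences are only cosmetic: the paper compares $\p' \geq \p$ instead of $\p' \leq \p$, and it states the order-statistic monotonicity without the justification you give.
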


\Cref{lem: aks nondecreasing} shows that monotonicity of $\a_{s}$ follows from monotonicity of $\k_s$. It relies heavily on Constraint \eqref{eq: ass r_k+1 upper bound}, and the form of $b_{k,s}$ was chosen there as the maximal bound needed for the proof of \Cref{lem: aks nondecreasing} to work.
 
\begin{restatable}{lemma}{aksmonotone} \label{lem: aks nondecreasing}
$a_{k,s}$ is non-decreasing in $k$.
\end{restatable}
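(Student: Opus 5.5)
The plan is to fix $s$ and show $a_{k-1,s}\le a_{k,s}$ for every $1\le k\le m$, by reducing the claim to an algebraic inequality that is exactly equivalent to the upper bound $r_{k,s}\le b_{k,s}$ in Constraint \eqref{eq: ass r_k+1 upper bound}. Write $c=m-s\ge 0$ and $g_k(r)=(k-c)r/(r-c)$. By \eqref{def: a}, $a_{k,s}=\frac{\alpha}{s}g_k(r_{k,s})$ when $r_{k,s}>k$, and $a_{k,s}=\frac{k\alpha}{s}$ otherwise. Note that $g_k(k)=k$ whenever $k\neq c$.

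\emph{Trivial cases.} If $k=1$, then $a_{0,s}=0<a_{1,s}$. If $s=m$, then $a_{k,m}=k\alpha/m$, as already remarked in the text, so monotonicity is immediate. If $k+s\le m+1$ with $s<m$, the initialization \eqref{eq: ass r initialization} gives $r_{k,s}=k$ and $r_{k-1,s}=k-1$. Hence $a_{k,s}=k\alpha/s$ and $a_{k-1,s}=(k-1)\alpha/s$, and the claim holds.

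\emph{Main case: $k+s\ge m+2$.} Here $k-1-c\ge 1$, so $r':=r_{k-1,s}\ge k-1>c$ and $r:=r_{k,s}\ge k>c$, and all denominators below are positive. Set $L:=s\,a_{k-1,s}/\alpha$. In both branches of \eqref{def: a} we have $L=(k-1-c)r'/(r'-c)$, using that $g_{k-1}(k-1)=k-1$ since $k-1\neq c$. Likewise $s\,a_{k,s}/\alpha=g_k(r)$ in both branches. It therefore suffices to show $g_k(r)\ge L$. Multiplying by $r-c>0$, this is equivalent to
\[
r\,(k-c-L)\ \ge\ -Lc .
\]
A direct computation gives
\[
L-(k-c)=\frac{c(k-c)-r'}{r'-c}.
\]
This splits the main case in two.
\begin{itemize}
\item If $c(k-c)-r'\le 0$, then $k-c-L\ge 0$, and the inequality holds trivially because its left side is nonnegative and its right side is nonpositive.
\item If $c(k-c)-r'>0$, then $L>k-c$, and the inequality rearranges to
\[
r\le \frac{Lc}{L-(k-c)}=\frac{c\,(k-1-c)\,r'}{c(k-c)-r'} .
\]
This is exactly $b_{k,s}$ with $k+s-m=k-c$ and $m-s=c$, and it holds by Constraint \eqref{eq: ass r_k+1 upper bound}.
\end{itemize}

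The main obstacle is the bookkeeping rather than any conceptual difficulty. One must make sure the two branches of \eqref{def: a} agree with the unified formula $g_k$ in every case used; the potential $0/0$ occurs only when $k=c$ or $k-1=c$, which is why the region $k+s\le m+1$ is split off first. One must also check the signs of $r-c$, $r'-c$ and $k-1-c$ before multiplying through. Once these are settled, the identity for $L-(k-c)$ is a one-line computation, and matching the resulting bound with $b_{k,s}$ completes the argument.
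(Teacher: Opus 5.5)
Your proof is correct and follows essentially the same route as the paper's. Both split off the region $k \le m-s+1$, where $a_{k,s}=k\alpha/s$, and in the region $k+s\ge m+2$ reduce $a_{k,s}\ge a_{k-1,s}$ by direct algebra to an inequality that is exactly the upper bound $r_{k,s}\le b_{k,s}$ of Constraint \eqref{eq: ass r_k+1 upper bound}. Your version is slightly more careful: you treat the case $(k+s-m)(m-s)\le r_{k-1,s}$ separately, where $b_{k,s}=m$ and the rational bound the paper invokes is not actually available (the needed inequality is trivial there), and you avoid the paper's division by $m-s$, which is zero when $s=m$.
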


Taking \Cref{lem: ks nonincreasing,lem: aks nondecreasing} and \Cref{lem: BR} together, we have the e-collection we need.

\begin{theorem} \label{thm: e-collection}
If \eqref{eq: ass p valid} and \eqref{eq: ass PRDS} hold, then $\EE = (\e_S)_{S \in 2^{[m]}}$ defined in \eqref{def: e-collection} is an e-collection.
\end{theorem}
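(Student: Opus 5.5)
To prove \Cref{thm: e-collection} we only need to verify \eqref{eq: minimally sufficient}, i.e., $\E(\e_N)\leq 1$, since the e-collection definition imposes no requirement on $\e_S$ for $S\neq N$ beyond non-negativity, and $\e_S\geq 0$ is immediate from \eqref{def: e-collection}. If $N=\emptyset$ then $\e_N=\e_\emptyset=1$ and we are done.

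So let $n=|N|\geq 1$. On the event $\k_n=0$ we have $\e_N=0$. On the complementary event, $\e_N$ is the average over $i\in N$ of $1\{\p_i\leq \a_n\}/\a_n$. The plan is to define, for each $i\in N$, a single function
\[
f(\p) = \begin{cases} a_{\k_n,n} & \textrm{if $\k_n>0$,} \\ a_{0,n}\textrm{ replaced by a suitable value} & \textrm{if $\k_n=0$,}\end{cases}
\]
such that $\e_N \leq \frac1n\sum_{i\in N} 1\{\p_i\leq f(\p)\}/f(\p)$. The simplest choice on $\{\k_n=0\}$ is to keep $f=a_{0,n}=0$ and interpret the ratio $1\{\p_i\leq 0\}/0$ as $0$. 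Since \Cref{lem: BR} is stated for a general coordinatewise non-increasing $f$, it implicitly allows the convention that the term vanishes when $f=0$. The usual proof writes $\E[1\{\p_i\leq f\}/f]=\int \P(\p_i\leq f, f\text{ near } c)/c$, and $f=0$ contributes nothing. Alternatively, any positive value not exceeding $a_{1,n}$ would work as a replacement, since it changes only the bound side and preserves monotonicity. In either case, linearity of expectation reduces the claim to showing $\E[1\{\p_i\leq f(\p)\}/f(\p)]\leq 1$ for each $i\in N$.

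The key step is to show that $f(\p)=a_{\k_n,n}$ is coordinatewise non-increasing in $\p$. Take $\p\leq \p'$ coordinatewise. \Cref{lem: ks nonincreasing} gives $\k_n(\p)\geq \k_n(\p')$. \Cref{lem: aks nondecreasing} says $a_{k,n}$ is non-decreasing in $k$ for fixed $n$, and the $a_{k,n}$ are deterministic because the $r_{k,s}$ are fixed in advance. Hence $a_{\k_n(\p),n}\geq a_{\k_n(\p'),n}$. Composing a non-increasing integer map with a non-decreasing sequence gives the required monotonicity, including the boundary case $\k_n=0$ with value $a_{0,n}=0$.

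Applying \Cref{lem: BR}, under \eqref{eq: ass p valid} and \eqref{eq: ass PRDS}, to each $i\in N$ with this $f$ then yields $\E(\e_N)\leq \frac1n\sum_{i\in N}1=1$. The only delicate point I anticipate is the treatment of $f=0$ on $\{\k_n=0\}$: one must make sure the convention $0/0=0$ is compatible with \Cref{lem: BR}. If it is not, I would instead set $f=\min(a_{1,n},\min_k \ldots)$, or more simply $f=a_{1,n}$, on that event. I would then check that monotonicity still holds, which it does because $a_{1,n}\leq a_{k,n}$ for all $k\geq1$ by \Cref{lem: aks nondecreasing}. Finally I would check that this substitution only enlarges the dominating bound, since $\e_N=0$ there.
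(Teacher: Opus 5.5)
Your proof is correct and follows the same route as the paper. \Cref{lem: ks nonincreasing} makes $\k_{|N|}$ coordinatewise non-increasing, and \Cref{lem: aks nondecreasing} makes $a_{k,|N|}$ non-decreasing in $k$, so $\a_{|N|}$ is coordinatewise non-increasing in $\p$; \Cref{lem: BR} and linearity of expectation then give $\E(\e_N)\leq 1$. Your separate handling of $N=\emptyset$ and of the event $\k_{|N|}=0$ (replacing $f$ by $a_{1,|N|}>0$ there, which keeps $f$ monotone and only enlarges the bound) covers edge cases that the paper's short proof passes over without comment.
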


\begin{proof}
By \Cref{lem: ks nonincreasing}, $\k_{|N|}$ is coordinatewise non-increasing in $\p$ and by \Cref{lem: aks nondecreasing} $a_{k,|N|}$ is non-decreasing in $k$. Therefore $\a_{|N|} = a_{\k_{|N|}, |N|}$ is coordinatewise non-increasing in $\p$. By \Cref{lem: BR}, 
\[
\E(\e_N) = \E \bigg(\frac1{|N|} \sum_{i \in N} \frac{1\big\{\p_i \leq \a_{|N|}\big\}}{\a_{|N|}} \bigg) = 
\frac1{|N|} \sum_{i \in N} \E \bigg(\frac{1\big\{\p_i \leq \a_{|N|}\big\}}{\a_{|N|}} \bigg) \leq 1.
\]
\end{proof}

\begin{remark}
\citet{BenjaminiYekutieli2001} showed the validity of BH under PRDS. \citet{Sarkar2008PRDS} and \citet{FithianLei2022} showed that PRDS is actually quite a restrictive condition. Many important distributions for which BH appears to be valid in practice do not satisfy PRDS. Indeed, PRDS is sufficient but not necessary for FDR control of BH. From the proof of \citet{BlanchardRoquain2008}, we see that a minimal sufficient condition for the validity of BH for FDR control at level $\pi_0\alpha$ is that
\begin{equation} \label{eq: min suf BH}
\EE \bigg(\frac1{|N|} \sum_{i\in N} \frac{1\{p_i \leq \b\alpha/m\}}{\b\alpha/m}\bigg) \leq 1,
\end{equation}
where $\b$ is the number of BH rejections, defined formally in \eqref{eq: BH} below. This condition is much less restrictive than PRDS, which is a condition on all combinations of all null p-values for all increasing sets. Instead, we have a single condition on the expectation of a single random variable. Similarly, PRDS is sufficient but not necessary for closed BH, and also \Cref{assumption} requires only a single, and very comparable expectation to be checked.
\end{remark}

\section{FDR control} \label{sec: FDR control}

We will now show that the FDR-controlling method implied by the e-collection of \Cref{sec: e-collection} is equal to the BH-like method described in \Cref{sec: BHlike description}. We will do this by showing that \eqref{eq: e-closure} holds for $\R = \I_\r$, with $\r$ defined in \eqref{def: r}, for the e-collection $\EE$ as defined in \eqref{def: e-collection}.

The main workhorse for this is \Cref{lem: minimal e}. This lemma shows that if we can make sure that the $k$th largest e-value is at least $1/a_{k,s}$, which happens in our e-collection if the $k$th smallest p-value is at most $a_{k,s}$, then this facilitates the necessary condition \eqref{eq: e-closure} for all $S$ with $|S|=s$ not just for $\R=\I_k$, but of all $\I_k, \ldots, \I_{r_{k,s}}$. This lemma motivates the name of $r_{k,s}$ as a horizon, since it allows $\p_{(k)}$ to ``look ahead'' to larger potential rejection sets than $\I_k$ up to the horizon $\I_{r_{k,s}}$. It also explains the precise form of the definition of $a_{k,s}$ in \eqref{def: a}, which was chosen as the maximal value that makes \Cref{lem: minimal e} work. 

\begin{restatable}{lemma}{minimale} \label{lem: minimal e}
Suppose $e_1, \ldots, e_m \geq 0$ are given and let $0 < k \leq m$ exist such that for all $1 \leq i \leq k$,
\[
e_i \geq 1/a_{k,s}.
\]
Then for every $S \subseteq [m]$ with $|S|=s$ and for every $R = [j]$ with $k \leq j \leq r_{k,s}$, we have
\[
\frac1{|S|} \sum_{i \in S} e_i \geq \frac{|R \cap S|}{|R|\alpha}.
\]
\end{restatable}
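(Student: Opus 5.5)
The plan is to reduce the claim to one inequality in a single integer variable, namely how many of the indices with large e-values fall in $S$. Fix $S$ with $|S|=s$ and $R=[j]$ with $k\le j\le r_{k,s}$, and put $t=|S\cap[k]|$. Because $e_i\ge 0$ for all $i$ and $e_i\ge 1/a_{k,s}$ for $i\le k$, the left-hand side satisfies
\[
\frac1s\sum_{i\in S}e_i\ \ge\ \frac{t}{s\,a_{k,s}}.
\]
Here $a_{k,s}>0$ because $k>0$.

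For the right-hand side, $R\cap S$ consists of the $t$ elements of $S\cap[k]$ together with at most $j-k$ elements of $\{k+1,\dots,j\}$. Hence $|R\cap S|\le t+j-k$. Also, $S$ has at most $m-k$ elements outside $[k]$, so $t\ge \max(0,k+s-m)$. It therefore suffices to show
\[
\frac{t}{s\,a_{k,s}}\ \ge\ \frac{t+j-k}{j\alpha}\qquad\text{for all integers } t\ge \max(0,k+s-m).
\]
We do not need the separate bound $|R\cap S|\le s$.

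Split according to the definition \eqref{def: a}.

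\emph{Case $r_{k,s}=k$.} Then $j=k$ and $a_{k,s}=k\alpha/s$, so both sides equal $t/(k\alpha)$.

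\emph{Case $r_{k,s}>k$.} By the initialization \eqref{eq: ass r initialization}, this forces $k+s>m+1$ unless $(k,s)=(1,m)$. In the exceptional case $s=m$, $a_{k,m}=k\alpha/m$ and $t=k$ (since $S=[m]$), so the inequality reads $m/(m\alpha)\ge j/(j\alpha)$, which holds with equality. Otherwise $k+s-m\ge 2$, and substituting $a_{k,s}$ turns the target into
\[
t\,j\,(r+s-m)\ \ge\ (k+s-m)\,r\,(t+j-k),\qquad r=r_{k,s}.
\]
Both sides are affine in $t$. It suffices to check two things: the inequality at the smallest admissible value $t_0=k+s-m$, and that the slope of the left side dominates the slope of the right side.

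At $t_0$ the target becomes $t_0 j(r+s-m)\ge t_0 r(j+s-m)$. This is equivalent to $(m-s)(r-j)\ge 0$, which holds since $s\le m$ and $j\le r$.

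For the slopes we need $j(r+s-m)\ge r(k+s-m)$. This rearranges to $r(j-k)\ge (r-j)(s-m)$. The left side is nonnegative and the right side is nonpositive, so it holds.

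The main obstacle is the bookkeeping. One has to verify that $t_0$ is the correct lower endpoint: it is positive in the relevant case, and it is attainable-or-below for every admissible $S$. One also has to handle the degenerate initialization $r_{1,m}=m$ separately. Once the problem is reduced to the affine-in-$t$ comparison, the two elementary sign checks above finish the proof.
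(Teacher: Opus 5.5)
Your proof is correct and follows the paper's argument: the same reduction to $t=|S\cap[k]|$, with $|R\cap S|\le t+j-k$ and $t\ge k+s-m$. The only difference is that you use linearity in $t$ (the value at $t_0$ plus a slope comparison), while the paper checks the linear function $h(j)$ at the endpoints $j=k$ and $j=r$. Your separate treatment of $s=m$ is also slightly more careful than the paper: its first case asserts $r_{k,s}=k$ whenever $k\le m-s+1$, which fails for $(k,s)=(1,m)$, although, as your computation shows, the inequality still holds there with equality.
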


The FDR control condition \eqref{eq: e-closure} of e-Closure for $\I_\r$ follows almost immediately from \Cref{lem: minimal e}. The argument is that, by construction, the conditions for \Cref{lem: minimal e} hold for $k=\k_s$, and that, also by construction, $\k_s \leq \r \leq r_{\k_s,s}$. 

\begin{restatable}{lemma}{eSOK} \label{thm: eS OK}
Let $\e_S$ be defined in \eqref{def: e-collection} and $\r$ in \eqref{def: r}. Then for all $S \in 2^{[m]}$ we have
\[
\e_S \geq \frac{|\I_\r \cap S|}{(\r\vee 1)\alpha}.
\] 
\end{restatable}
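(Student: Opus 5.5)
We argue separately for each $S$, with $s=|S|$, reducing the claim to \Cref{lem: minimal e} applied to the realized p-values. Trivial cases go first. If $S=\emptyset$, the right-hand side is $0\leq 1=\e_\emptyset$. If $\r=0$, the right-hand side is $0$ and $\e_S\geq 0$ always. So assume $S\neq\emptyset$ and $\r\geq 1$, so that $\r\vee 1=\r$.

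Next we show $\k_s\geq 1$. By the definition \eqref{def: r} of $\r$, we have $\v_{\r,s}\geq \r\geq 1$. Hence some $k'\leq \r$ has $\r_{k',s}\geq 1$. That forces $k'\geq 1$ (since $r_{0,s}=0$) and $\p_{(k')}\leq a_{k',s}$. By the definition \eqref{def: ks}, $\k_s\geq k'\geq 1$, so $\e_S$ is given by \eqref{def: e-collection} rather than being $0$.

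Now we bracket $\r$: we claim $\k_s\leq \r\leq r_{\k_s,s}$. The left inequality is built into \eqref{def: ks}. For the right inequality, $r_{k,s}$ is non-decreasing in $k$ by \eqref{eq: ass r_k+1 upper bound}, and $\k_s$ is the largest index $k\leq\r$ with $\p_{(k)}\leq a_{k,s}$. Therefore every nonzero $\r_{k',s}$ with $k'\leq\r$ satisfies $r_{k',s}\leq r_{\k_s,s}$. It follows that $\v_{\r,s}=r_{\k_s,s}$, and this is $\geq\r$.

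Then we set up per-hypothesis e-values so that \Cref{lem: minimal e} applies. Put $e_i=1\{\p_i\leq \a_s\}/\a_s$ with $\a_s=a_{\k_s,s}$, so that $\e_S=|S|^{-1}\sum_{i\in S}e_i$. The lemma is stated for indices $1,\dots,k$ and $R=[j]$. We relabel hypotheses by the order of their p-values (ties broken by index), so that $\I_j$ becomes $[j]$; the statement of the lemma is invariant under such a permutation of $[m]$. Let $k=\k_s$. The $k$ smallest p-values are all $\leq \p_{(k)}\leq a_{k,s}=\a_s$, so each of the corresponding $e_i$ equals $1/a_{k,s}$. This is the hypothesis of \Cref{lem: minimal e}, and since $\a_s>0$ for $k\geq 1$ (noted after \eqref{def: a}), there is no division by zero. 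Applying the lemma with $j=\r\in[k,r_{k,s}]$ gives
\[
\e_S\geq \frac{|\I_\r\cap S|}{\r\,\alpha},
\]
which is the claim. The only delicate points are the bracketing $\r\leq r_{\k_s,s}$, which relies on the monotonicity of $r_{k,s}$ in $k$, and the permutation reduction; both look routine.
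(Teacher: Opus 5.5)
Your proof is correct and follows essentially the same route as the paper's. Like the paper, you bracket $\k_s \leq \r \leq r_{\k_s,s}$ using the monotonicity of $r_{k,s}$ in $k$ and the definition of $\r$, then apply \Cref{lem: minimal e} with $j=\r$ to the e-values $1\{\p_i \leq \a_s\}/\a_s$, relabeled by p-value rank; your explicit check that $\k_s \geq 1$ (so $\e_S$ is not the zero case) is a detail the paper leaves implicit.
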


FDR control for $\I_\r$ is now immediate from the e-Closure principle.

\begin{theorem} \label{thm: FDR control}
Let $\r$ be as defined in \eqref{def: r}. If \Cref{assumption} holds, then the method that rejects the hypotheses with indices in $\I_\r$ has FDR control at level $\alpha$, i.e.,
\[
\E \bigg( \frac{|\I_\r \cap N|}{\r\vee 1} \bigg) \leq \alpha.
\]
\end{theorem}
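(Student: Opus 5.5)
The plan is to apply the e-Closure principle (\Cref{thm: e-Closure}) with $\R = \I_\r$ and the e-collection $\EE = (\e_S)_{S\in 2^{[m]}}$ defined in \eqref{def: e-collection}. Two ingredients are needed. The first is that $\EE$ is an e-collection, i.e., $\E(\e_N)\leq 1$. The second is that the closure inequality \eqref{eq: e-closure} holds for every $S \in 2^{[m]}$.

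For the first ingredient, I will check the special cases before invoking \Cref{assumption}. If $N=\emptyset$, then $\e_N=\e_\emptyset=1$, so $\E(\e_N)\leq 1$ trivially. If $N\neq\emptyset$ and $\k_{|N|}=0$, then $\e_N=0$. Otherwise $\e_N$ is exactly the quantity inside the expectation in \Cref{assumption}. I will read \Cref{assumption} with the convention that this quantity is set to $0$ on the event $\k_{|N|}=0$, because there $\a_{|N|}=a_{0,|N|}=0$ and the expression is otherwise undefined. With that convention, \Cref{assumption} gives $\E(\e_N)\leq 1$ directly. Making this convention explicit is the only small point of care in this step; everything else is definitional. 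Note also that under PRDS the same conclusion already follows from \Cref{thm: e-collection}, but the theorem assumes only \Cref{assumption}, so I will argue from that.

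For the second ingredient, \Cref{thm: eS OK} states precisely that $\e_S \geq |\I_\r\cap S|/((\r\vee 1)\alpha)$ for all $S$. Since $|\I_\r| = \r$, this is \eqref{eq: e-closure} with $\R=\I_\r$.

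With both ingredients in hand, \Cref{thm: e-Closure} gives FDR control. Writing this out makes the short argument transparent: take $S=N$ in \Cref{thm: eS OK}, which gives $|\I_\r\cap N|/(\r\vee 1) \leq \alpha\,\e_N$. Taking expectations then gives $\E\big(|\I_\r\cap N|/(\r\vee 1)\big)\leq \alpha\,\E(\e_N)\leq\alpha$. There is no real obstacle: all the substantive work sits in \Cref{thm: eS OK} and in the validity condition, both of which are taken as given.
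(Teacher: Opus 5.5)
Your proposal is correct and matches the paper's proof, which simply combines \Cref{thm: e-Closure} with \Cref{thm: eS OK}. Your explicit step of taking $S=N$ and then expectations is just the one-line proof of e-Closure for a single rejection set, and your convention that the quantity in \Cref{assumption} is $0$ on $\{\k_{|N|}=0\}$ agrees with the definition of $\e_S$ in \eqref{def: e-collection}, so it is the intended reading rather than an extra assumption.
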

    
\begin{proof}
Combine \Cref{thm: e-Closure} with \Cref{thm: eS OK}.
\end{proof}

The proof of validity under PRDS of the method follows by combining \Cref{thm: FDR control} with \Cref{thm: e-collection}. If PRDS does not hold, we may not have \Cref{assumption}. However, if we know by how much \Cref{assumption} is violated, we know exactly what level of FDR control is retained. \Cref{thm: FDR gamma} makes this precise. This theorem can be used to show robustness: violations of PRDS that lead to small $\gamma$ lead to small excess FDR.

\begin{theorem} \label{thm: FDR gamma}
Let $\r$ be as defined in \eqref{def: r}. If
\[
\EE \bigg(\frac1{|N|} \sum_{i\in N} \frac{1\{\p_i \leq \a_{|N|}\}}{\a_{|N|}}\bigg) \leq \gamma,
\]
then the method that rejects the hypotheses with indices in $\I_\r$ has FDR control at level $\gamma\alpha$, i.e.,
\[
\E \bigg( \frac{|\I_\r \cap N|}{\r\vee 1} \bigg) \leq \gamma\alpha.
\]
\end{theorem}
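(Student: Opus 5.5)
The plan is to reduce \Cref{thm: FDR gamma} to \Cref{thm: FDR control} by rescaling the e-collection. \Cref{thm: eS OK} is deterministic: it holds pointwise for every realization of $\p$ and uses no distributional assumption. So for every $S \in 2^{[m]}$ we have $\e_S \geq |\I_\r \cap S|/((\r\vee 1)\alpha)$. In particular, taking $S = N$,
\[
\frac{|\I_\r \cap N|}{\r \vee 1} \leq \alpha\, \e_N
\]
almost surely.

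Next I take expectations. For $N \neq \emptyset$, the definition \eqref{def: e-collection} gives $\e_N = \frac1{|N|}\sum_{i\in N} 1\{\p_i \leq \a_{|N|}\}/\a_{|N|}$ whenever $\k_{|N|} > 0$. When $\k_{|N|} = 0$ we have $\e_N = 0$; I interpret the summand in the hypothesis as $0$ in that case, or more simply read the hypothesis as $\E(\e_N) \leq \gamma$. Then
\[
\E \bigg( \frac{|\I_\r \cap N|}{\r\vee 1} \bigg) \leq \alpha\, \E(\e_N) \leq \gamma\alpha.
\]
If $N = \emptyset$, the false discovery proportion is identically $0$, so the bound holds trivially.

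Equivalently, I could apply \Cref{thm: e-Closure} at level $\gamma\alpha$ to the rescaled collection $\e_S/\gamma$. This collection satisfies $\E(\e_N/\gamma) \leq 1$, and the inequality of \Cref{thm: eS OK} divided by $\gamma$ gives exactly \eqref{eq: e-closure} with $\alpha$ replaced by $\gamma\alpha$. Even so, the direct one-line Markov-free computation above is cleaner, and it is the route I will write.

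There is no real obstacle. The only care needed is the convention at $\k_{|N|} = 0$, where $\a_{|N|} = a_{0,|N|} = 0$, so that the expression in the hypothesis is identified with $\e_N$. I also need to note that \Cref{thm: eS OK} requires no assumption on the dependence structure of $\p$.
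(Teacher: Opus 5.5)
Your proof is correct and is essentially the paper's argument. The paper rescales to $\e_S/\gamma$ and invokes \Cref{thm: e-Closure} at level $\gamma\alpha$ together with \Cref{thm: eS OK}. You instead inline the one-line argument behind \Cref{thm: e-Closure}: the pointwise bound $|\I_\r\cap N|/(\r\vee 1)\leq \alpha\e_N$ from \Cref{thm: eS OK} at $S=N$, followed by taking expectations.

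Your direct version also avoids dividing by $\gamma$, and it handles $N=\emptyset$ explicitly, where $\e_\emptyset/\gamma=1/\gamma$ would not be a valid e-value if $\gamma<1$. Your reading of the hypothesis as $\E(\e_N)\leq\gamma$, with the convention $\e_N=0$ when $\k_{|N|}=0$, is the right one.
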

    
\begin{proof}
Let $\tilde \e_S = \e_S/\gamma$ for all $S \in 2^{[m]}$. Then $(\tilde\e_S)_{S \in 2^{[m]}}$ is an e-collection. By \Cref{thm: eS OK},  for all $S \in 2^{[m]}$,
\[
\tilde\e_S  = \frac{\e_S}{\gamma} \geq \frac{|\I_\r \cap S|}{(\r\vee 1)\gamma\alpha}.
\] 
The result now follows from \Cref{thm: e-Closure}.
\end{proof}

\section{Closed BH} \label{sec: closed BH}

We have developed not a single FDR-controlling procedure under PRDS, but a whole family of such methods. This family is parameterized by the choice of $r_{k,s}$. Though subject to some constraints, the number of possible configurations for $r_{k,s}$ explodes very quickly as $m$ grows. In this section, we will investigate properties of methods resulting from some of these choices and define the closed BH method.

The choice of $r_{k,s}$ represents a true trade-off. A choice $r'_{k,s} > r_{k,s}$ means that $\r'_{k,s}$ is larger than $\r_{k,s}$ if the former is not zero, but, because $a'_{k,s} < a_{k,s}$, $\r'_{k,s}$ is more likely to be zero than $\r_{k,s}$. Therefore, one choice of $r_{k,s}$ does not generally dominate another. We can think of larger $r_{k,s}$ as ``ambitious,'' high-risk high-gain, and smaller $r_{k,s}$ as ``cautious''. However, the net effect of any choice of $r_{k,s}$ depends on the full configuration  of all the other $r_{k,s}$ and the resulting landscape is complex. 

Since our stated purpose is to improve BH, a frame of reference will be the performance of BH-like methods relative to BH and to MABH, which is known to improve BH. BH rejects a number
\begin{equation} \label{eq: BH}
\b = \max\{0\leq k \leq m\colon p_{(k)} \leq k\alpha/m\}
\end{equation}
of hypotheses, while MABH rejects
\begin{equation} \label{def: MABH}
\b' = 1\{\b>0\} \cdot \max\{0\leq k \leq m\colon p_{(k)} \leq k\alpha/(m-1)\}. 
\end{equation}

We will first consider the most cautious choice of $r_{k,s}$. By \eqref{eq: ass r greater k} the minimal choice for $r_{k,s}$ is
\begin{equation}
   \check r_{k,s} = \begin{cases}
        m & \textrm{if $s=m$ and $k \neq 0$;}\\
        k & \textrm{otherwise.}
    \end{cases}
\end{equation}
Let $\check\r$ be the number of rejections resulting from this choice of $r_{k,s}$. This choice exactly reconstructs MABH, as \Cref{lem: MABH} claims. It follows that MABH is the most cautious among all BH-like methods.

\begin{restatable}{lemma}{lemMABH} \label{lem: MABH}
    $\check\r = \b'$
\end{restatable}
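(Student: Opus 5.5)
With $\check r_{k,s}$ the thresholds simplify, so we first compute them and then compare the two resulting rejection counts directly.

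\textbf{Thresholds.} For $s<m$ we have $\check r_{k,s}=k$, so $a_{k,s}=k\alpha/s$. For $s=m$ we always have $a_{k,m}=k\alpha/m$. Hence
\[
\check\r_{k,s}=k\,1\{\p_{(k)}\le k\alpha/s\}\quad (s<m),\qquad \check\r_{k,m}=m\,1\{\p_{(k)}\le k\alpha/m\}\quad (k\ge1).
\]

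\textbf{Characterising $\check\r$.} Fix $s<m$. Since $\check\r_{k',s}\le k'$, we get $\v_{k,s}\ge k$ if and only if $\check\r_{k,s}=k$, i.e.\ $\p_{(k)}\le k\alpha/s$. The condition $\v_{k,s}\geq k$ for all $s<m$ is therefore equivalent to $\p_{(k)}\le k\alpha/(m-1)$, because $k\alpha/s$ is smallest at $s=m-1$. For $s=m$, the condition $\v_{k,m}\ge k$ (with $k\ge1$) holds if and only if some $1\le k'\le k$ has $\p_{(k')}\le k'\alpha/m$. So for $k\ge1$, the value $k$ is admissible in \eqref{def: r} iff both
\[
\p_{(k)}\le k\alpha/(m-1)\quad\text{and}\quad \min_{1\le k'\le k}\bigl(\p_{(k')}-k'\alpha/m\bigr)\le 0 .
\]

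\textbf{Comparison with $\b'$.} The second condition holds for some $k\le K$ iff BH rejects anything among the first $K$ ordered p-values. In particular it holds for every $k\ge1$ only if $\p_{(1)}\le\alpha/m$. That is not quite $\b>0$, so this is the delicate step.

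The key point is that the second condition is monotone in $k$: once it holds for some $k$, it holds for all larger $k$. Let $k^*$ be the smallest $k'$ with $\p_{(k')}\le k'\alpha/m$. Then $k^*$ exists iff $\b>0$. If $\b=0$, no $k\ge1$ is admissible, so $\check\r=0=\b'$.

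If $\b>0$, then $\check\r=\max\{k\ge k^*:\p_{(k)}\le k\alpha/(m-1)\}$. We need this to equal $\max\{k:\p_{(k)}\le k\alpha/(m-1)\}$, i.e.\ that this maximum is at least $k^*$. This holds because $\p_{(k^*)}\le k^*\alpha/m\le k^*\alpha/(m-1)$, so $k^*$ itself belongs to the MABH set. Hence $\check\r=\b'$.

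The main obstacle is being careful with the $s=m$ column: it uses the cumulative maximum, and its entries jump to $m$. The other thing to check is the edge case $m=1$, where there are no $s<m$ and $k\alpha/(m-1)$ is undefined. There we interpret the MABH threshold as vacuous, and the result reduces to BH.
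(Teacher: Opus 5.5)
Your proof is correct and uses the same observations as the paper's proof. These are the thresholds $k\alpha/s$ for $s<m$, with $s=m-1$ binding, and the $s=m$ column, whose entries are $0$ or $m$ and so encode $\b>0$. You organize them as an exact characterization of the admissible $k$, whereas the paper proves $\check\r\le\b'$ from the column $s=m-1$ and $\check\r\ge\b'$ by checking every column at $k=\b'$.

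Your version is slightly more careful, because you treat $\b=0$ explicitly. The paper's facts (1.) and (3.) silently presuppose $\b>0$. The only blemish is your exploratory remark that the second condition holds for some $k\le K$ iff BH rejects something among the first $K$ ordered p-values. That remark is imprecise, but your argument never uses it.
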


Although the improvement of MABH over BH is uniform, the improvement is also tiny, especially if $m$ is large \citep{SolariGoeman2017}. We have, trivially,
\begin{equation} \label{eq: MABH limited}
\b' \leq \b_{m\alpha/(m-1)},
\end{equation}
where $\b_{\alpha'}$ is the number of BH-rejections at level $\alpha'$. This relationship is immediate from \eqref{def: MABH}. It follows that the gain of MABH relative to BH vanishes as $m$ grows, and the improvement of MABH over BH has rightfully been described as ``tiny'' \citep{SolariGoeman2017}. Still, it turns out that we cannot uniformly improve upon MABH within the class of BH-like methods, as \Cref{lem: MABH admissible} claims. We call one BH-like method rejecting $\r'$ a uniform improvement of another BH-like method rejecting $\r$ if, for all $\alpha>0$, (1.) $\r' \geq \r$ for all $\p$, and (2.) $\r' > \r$ for at least one choice of $\p$.

\begin{restatable}{proposition}{MABHadmissible} \label{lem: MABH admissible}
No BH-like method uniformly improves MABH.
\end{restatable}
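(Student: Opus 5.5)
The plan is to show that any BH-like method whose horizons differ from $\check r_{k,s}$ is, for suitable $\alpha$ and $\p$, strictly worse than MABH somewhere. If $r_{k,s}=\check r_{k,s}$ everywhere, then by \Cref{lem: MABH} the method coincides with MABH, so it is not a strict improvement. So assume some free entry has $r_{k,s}>k$.

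\emph{Locating the entry.} Since the entry is free, $k+s>m+1$. Since $s=m$ is forced, $s\le m-1$, and hence $k\ge 2$. Also $r_{1,s'}=1$ for all $s'<m$, because $1+s'\le m$ falls in the initialized region.

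\emph{Threshold comparison.} First I check that $a_{k,s}<k\alpha/s$. From \eqref{def: a}, this is equivalent to $(k+s-m)r_{k,s}<k(r_{k,s}+s-m)$, i.e.\ $(s-m)r_{k,s}<(s-m)k$. This holds because $s<m$ and $r_{k,s}>k$. Since $s\le m-1$, it follows that
\[
a_{k,s}<k\alpha/(m-1).
\]

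\emph{Construction of $\p$.} Take $\alpha$ small enough that all thresholds $a_{j,s}$ are below $1$. Set
\begin{itemize}
\item[] $\p_{(1)}$ to a tiny value, at most $\alpha/m$;
\item[] $\p_{(2)}=\cdots=\p_{(k)}=c$, with $c\in(a_{k,s},\,k\alpha/(m-1)]$;
\item[] all remaining p-values equal to $1$.
\end{itemize}

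\emph{MABH rejects at least $k$.} We have $\b\ge 1$ because $\p_{(1)}\le \alpha/m$. Together with $\p_{(k)}=c\le k\alpha/(m-1)$, this gives $\b'\ge k$ by \eqref{def: MABH}.

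\emph{The BH-like method rejects fewer than $k$.} I show that at step $s$ we have $\v_{j,s}<j$ for every $j\ge k$, which gives $\r<k$.
\begin{itemize}
\item[] Entries $k''>k$ have $\p_{(k'')}=1>a_{k'',s}$, so $\r_{k'',s}=0$.
\item[] Entries $2\le k'\le k$ have $\p_{(k')}=c>a_{k,s}\ge a_{k',s}$ by \Cref{lem: aks nondecreasing}, so $\r_{k',s}=0$.
\item[] Entry $k'=1$ gives $\r_{1,s}\le r_{1,s}=1<k$.
\end{itemize}
Hence $\v_{j,s}\le 1<k\le j$ for all $j\ge k$, so $\r<k\le\b'$ and the method is not a uniform improvement.

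\emph{Main obstacle.} The delicate step is keeping the small first p-value from helping the BH-like method while still triggering $\b>0$ for MABH. This is resolved by the observation that $r_{1,s}=1$ for all $s<m$, which follows from the initialization \eqref{eq: ass r initialization}. A second, minor point is choosing $\alpha$ small so that p-values equal to $1$ never pass any threshold. This is legitimate because uniform improvement must hold for all $\alpha>0$.
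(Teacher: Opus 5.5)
\textbf{The gap: the threshold comparison has the inequality backwards.} From $a_{k,s}<k\alpha/s$ and $s\le m-1$ you cannot conclude $a_{k,s}<k\alpha/(m-1)$. The fact $s\le m-1$ gives $k\alpha/s\ge k\alpha/(m-1)$, which points the wrong way. For $s<m-1$ the interval $(a_{k,s},\,k\alpha/(m-1)]$ from which you pick $c$ can be empty, and then the construction collapses.

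A cleverer $\p$ cannot fix this. The statement your argument would prove is false: namely, that every horizon matrix differing from $\check r_{k,s}$ loses to MABH for some $\p$. Take $m=7$ and change only $r_{5,4}$ from $5$ to $6$. This is admissible, since $b_{5,4}=6$ and $r_{6,4}=6\le b_{6,4}=7$. Then $a_{5,4}=\alpha\ge 5\alpha/6$, so $a_{k,s}\ge k\alpha/(m-1)$ still holds for all $k$ and all $s<m$, and column $s=m-1$ is untouched. The argument of \Cref{lem: MABH} then gives $\r=\b'$ for every $\p$ and every $\alpha$. This method differs from MABH as a horizon matrix, yet is never worse than MABH.

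\textbf{The fix: split on the single column $s=m-1$, as the paper does.}

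\emph{Case $r_{k,m-1}>k$ for some $k$.} Run your construction with $s=m-1$. There $k\alpha/s=k\alpha/(m-1)$ exactly, so your computation really does give $a_{k,m-1}<k\alpha/(m-1)$. The choice $c=k\alpha/(m-1)$ is then available, and the rest of your argument goes through. This is essentially the paper's p-value configuration.

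\emph{Case $r_{k,m-1}=k$ for all $k$.} Then $a_{j,m-1}=j\alpha/(m-1)$.
\begin{itemize}
\item When $\b>0$, every $j>\b'$ has $\p_{(j)}>j\alpha/(m-1)$. Hence $\r_{j,m-1}=0$ for $j>\b'$ and $\r_{j,m-1}\le j$ otherwise, so $\v_{j,m-1}\le\b'<j$ for all $j>\b'$, and therefore $\r\le\b'$.
\item When $\b=0$, step $s=m$ already forces $\r=0$.
\end{itemize}
So such a method can never strictly beat MABH, and condition (2) of uniform improvement fails.

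This second case must replace your case $r_{k,s}=\check r_{k,s}$, which is too narrow to cover all non-deviating methods.
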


By \Cref{lem: MABH admissible} we cannot hope to improve upon MABH uniformly with the theory we have developed in this paper. However, we can improve upon BH in alternative and less cautious ways than MABH does. Indeed, MABH is not the only BH-like method that uniformly improves BH. \Cref{lem: improve BH} characterizes a necessary condition for BH-like procedures uniformly to improve BH. They do so if they are not too ambitious. 

\begin{restatable}{proposition}{criterionBH} \label{lem: improve BH}
We have $\r \geq \b$ for all $\alpha>0$ if and only if 
\[
r_{k,s} \leq \frac{ks}{m-k}
\]
whenever $k+s > m+1$, $s<m$ and $k < m$.
\end{restatable}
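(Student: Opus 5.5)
The proof has two directions. The backbone of both is a single algebraic equivalence: for $k+s>m+1$, $s<m$ and $k<m$, the BH critical value $k\alpha/m$ lies below the threshold $a_{k,s}$ exactly when the horizon satisfies $r_{k,s}\leq ks/(m-k)$. I would establish this first.

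\emph{The equivalence.} Suppose $r_{k,s}>k$. Then in this range $r_{k,s}+s-m>k+s-m>0$, so the denominator in \eqref{def: a} is positive. The inequality $a_{k,s}\geq k\alpha/m$ becomes, after cross-multiplying,
\[
m(k+s-m)r_{k,s}\geq ks(r_{k,s}+s-m).
\]
Using $m(k+s-m)-ks=(m-s)(k-m)$, this rearranges to $r_{k,s}(m-s)(k-m)\geq ks(s-m)$. Dividing by $m-s>0$ gives $r_{k,s}(m-k)\leq ks$, i.e. $r_{k,s}\leq ks/(m-k)$. If instead $r_{k,s}=k$, then $a_{k,s}=k\alpha/s\geq k\alpha/m$, and $k\leq ks/(m-k)$ also holds because $s\geq m-k$ in this range. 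So in either case we get
\[
a_{k,s}\geq k\alpha/m \iff r_{k,s}\leq ks/(m-k).
\]

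\emph{Sufficiency.} Assume the condition and let $\b=k>0$, so $\p_{(k)}\leq k\alpha/m$. I check $\v_{k,s}\geq k$ for every $s$ by using the index $k'=k$ itself.
\begin{itemize}
\item For $k+s\leq m+1$ or $s=m$: we have $a_{k,s}\geq k\alpha/m$ directly, since $a_{k,s}$ is either $k\alpha/s$ with $s\leq m$, or $a_{k,m}=k\alpha/m$.
\item For $k=m$: $r_{m,s}=m$ and $a_{m,s}=m\alpha/s\geq\alpha$.
\item In the remaining free region: the equivalence above gives $a_{k,s}\geq k\alpha/m$.
\end{itemize}
Hence in all cases $\p_{(k)}\leq a_{k,s}$, so $\r_{k,s}=r_{k,s}\geq k$ by \eqref{eq: ass r greater k}. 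Therefore $\v_{k,s}\geq k$ for all $s$, and $\r\geq k=\b$ by \eqref{def: r}.

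\emph{Necessity.} Suppose $r_{k,s}>ks/(m-k)$ for some $k,s$ with $k+s>m+1$, $s<m$, $k<m$. Note that $k\geq 2$ here. By the equivalence, $a_{k,s}<k\alpha/m$. Take $\alpha$ small enough that every $a_{j,s'}<1$, and set $\p_{(1)}=\dots=\p_{(k)}=k\alpha/m$ with all other p-values equal to $1$. Then $\b=k$.

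At this step $s$, consider $k'\leq k$. By \Cref{lem: aks nondecreasing}, $a_{k',s}\leq a_{k,s}<k\alpha/m=\p_{(k')}$, so $\r_{k',s}=0$ for all $1\leq k'\leq k$. Indices $j>k$ fail their thresholds because $\p_{(j)}=1>a_{j,s}$. Hence $\v_{k'',s}=0$ for every $k''\geq 1$, which forces $\r=0<\b$. So BH is not uniformly dominated for this $\alpha$.

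The step I expect to be the main obstacle is the construction in necessity. A naive choice of p-values could let some earlier index $k'<k$ with a large horizon rescue $\v_{k,s}$. Putting all of the first $k$ p-values at the common value $k\alpha/m$, and invoking the monotonicity of \Cref{lem: aks nondecreasing}, is what rules this out.
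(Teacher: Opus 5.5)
Your proof is correct and follows essentially the same route as the paper: reduce to the equivalence $a_{k,s}\geq k\alpha/m \iff r_{k,s}\leq ks/(m-k)$ on the free region, use the index $\b$ itself at every step $s$ for sufficiency, and for necessity take $\alpha$ small, set the first $k$ p-values to $k\alpha/m$ and the rest to $1$, and invoke \Cref{lem: aks nondecreasing} to force every $\r_{k',s}$ to zero at the offending step. The differences are cosmetic. You prove the equivalence by direct cross-multiplication, whereas the paper uses strict monotonicity of $a_{k,s}$ in $r_{k,s}$ and substitutes $ks/(m-k)$. You also handle the boundary cases $k+s\leq m+1$, $s=m$ and $k=m$ explicitly and correctly; the paper's proof states these only in a garbled form.
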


The proof of \Cref{lem: improve BH} is essentially the observation that $r_{k,s} \leq ks/(m-k)$ is equivalent to $a_{k,s} \geq k\alpha/m$. 

\begin{remark}
It is perhaps counterintuitive that according to the definition, BH is not itself a BH-like method. However, BH could be constructed with the recipe of \Cref{sec: BHlike description} if the requirement that $r_{1,m} = m$ would be relaxed, thus allowing the choice that $r_{k,s}=k$ for all $k,s$. This choice recovers BH, as is easily checked. This choice is suboptimal, however, since $a_{k,m}$ does not depend on $r_{k,m}$ at all, so we may take $r_{k,m}$ maximal without paying any price. Since a larger $r_{k,s}$ for the same $a_{k,s}$ is never worse, $r_{k,m} = k$ is suboptimal.
\end{remark}

\Cref{lem: improve BH} suggests that alternatives to MABH exist for uniformly improving BH. Because of the underwhelming performance of the maximally cautious MABH, we define \emph{closed BH} as the maximally ambitious method improving BH uniformly, i.e., the BH-like method with maximal $r_{k,s}$ while still satisfying the condition of \Cref{lem: improve BH}. It is defined recursively as the BH-like method with the choice
\[
\bar r_{k,s} = \min\Big( b_{k,s}, \Big\lfloor \frac{ks}{m-k}\Big\rfloor \Big) ,
\]
for $k+s> m+1$ and $s<m$ from the initialization \eqref{eq: ass r initialization}, where we remember that $r_{k,m}=m$ is forced for $k>0$. Let $\bar\r$ be the resulting number of rejections. The example of \Cref{tab: example r} corresponds to the $r_{k,s}$ of closed BH for $m=7$. It illustrates that closed BH is indeed a uniform improvement of BH, since, with the p-values of \Cref{tab: example thresholded r}, the method rejects more than BH.

It is interesting to contrast closed BH to MABH. We see from the proof of \Cref{lem: MABH admissible} that a method trying to be at least as good as MABH everywhere must have $r_{k,s}=k$ like MABH. In contrast, \Cref{lem: improve BH} gives much more room for $r_{k,s}$ if the goal is to improve BH. In the example of \Cref{tab: example r}, the only choice of $r_{k,s}$ that was restricted by the constraint $r_{k,s} \leq ks/(m-k)$ is at $k=3$, $s=6$; all others were already constrained by $b_{k,s}$. The price to pay, in terms of lower critical values $a_{k,s}$ compared to MABH, is limited however, since by \eqref{eq: MABH limited} the critical values of closed BH are at least those of BH, so $(m-1)/m$ times those of MABH. This is a small price to pay for a large increase in $r_{k,s}$.

However, the constraint due to \Cref{lem: improve BH} does bind closed BH closely to MABH when the number of rejections of the latter method is small. We have that if $k < \sqrt{m}$, then $ks/(m-k) < k+1$. Therefore, $\bar r_{k,s} = k$ if $k < \sqrt{m}$, implying that closed BH and MABH have identical $r_{k,s}$ for $k < \sqrt{m}$. The consequence of this observation is made clear in \Cref{lem: sqrt m}.

\begin{restatable}{proposition}{lemsqrtm} \label{lem: sqrt m}
If $\check{\r} < \sqrt{m}$, then $\bar \r = \check{\r}$.
\end{restatable}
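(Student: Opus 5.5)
The plan is to prove the two inequalities $\bar\r \geq \check\r$ and $\bar\r \leq \check\r$ separately, writing $c = \check\r < \sqrt m$. Both directions rest on the observation made just before the statement: for $k < \sqrt m$ we have $\bar r_{k,s} = k = \check r_{k,s}$ for all $s<m$. For $s=m$ both choices equal $m$ when $k>0$. Since $a_{k,s}$ depends only on $r_{k,s}$, the threshold values, the thresholded horizons $\r_{k,s}$, and the cumulative maxima $\v_{k,s}$ then coincide for both methods in every row $k<\sqrt m$ and every $s$.

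For $\bar\r \geq c$: the check method has $\v_{c,s} \geq c$ for all $s$. The value $\v_{c,s}$ involves only rows $k' \leq c < \sqrt m$, so it is the same for closed BH. Hence $c$ lies in the set in \eqref{def: r} for closed BH, and $\bar\r \geq c$.

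For $\bar\r \leq c$, I would argue by contradiction using the explicit form $\check\r = \b'$ from \Cref{lem: MABH}. Suppose $j = \bar\r > c$.

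\textbf{Step 1 ($\b>0$).} From $\bar\v_{j,m} \geq j > 0$, some $0<k\leq j$ has $\p_{(k)} \leq a_{k,m} = k\alpha/m$, so $\b>0$.

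\textbf{Step 2 (the column $s=m-1$).} Take $k = \k_{m-1}$ for closed BH as in \eqref{def: ks}. Then $k \le j$, $\p_{(k)} \leq \bar a_{k,m-1}$ and $\bar r_{k,m-1} \geq j$.

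\textbf{Step 3 (case split on $k$).}
\begin{itemize}
\item If $k < \sqrt m$, then $\bar r_{k,m-1} = k$, so $k \geq j$ and hence $k=j$. This gives $\p_{(j)} \leq j\alpha/(m-1)$, so $\b' \geq j > c$.
\item If $k \geq \sqrt m$, I use that $a_{k,s}$ is non-increasing in $r_{k,s}$ when $k+s>m$. The map $r \mapsto (k+s-m)r/((r+s-m)s)$ has derivative of sign $(k+s-m)(s-m) \le 0$. Comparing with the value $k\alpha/s$ at $r=k$ gives $\bar a_{k,m-1} \leq k\alpha/(m-1)$; when $k+s\le m$ the two thresholds are equal anyway. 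Hence $\p_{(k)} \leq k\alpha/(m-1)$, and $\b' \geq k \geq \sqrt m > c$.
\end{itemize}
Either case contradicts $\b' = c$.

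The main obstacle I expect is the second case of Step 3. There the rows of the two methods no longer agree, so the argument must go through the closed form of MABH rather than a row-by-row comparison. The needed monotonicity of $a_{k,s}$ in $r_{k,s}$ is a routine calculus check. The degenerate case $m=1$, where the column $s=m-1$ does not exist, is handled separately: then $c<1$ forces $c=0$, and the only choices are $r_{1,1}=1$, so both methods coincide trivially.
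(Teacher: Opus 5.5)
Your proof is correct and follows essentially the same route as the paper. Both directions rest on $\bar r_{k,s}=\check r_{k,s}$ for $k<\sqrt m$, and the upper bound comes from the column $s=m-1$ together with $\check{\r}=\b'$ from \Cref{lem: MABH}; you argue by contradiction where the paper argues directly, and your lower bound transfers $\check{\v}_{c,s}\geq c$ row by row instead of re-checking thresholds. Your proof is also slightly more careful than the paper's: you state the inequality $\bar a_{k,m-1}\leq k\alpha/(m-1)$ for $k\geq\sqrt m$, which the paper writes as an equality, and you handle the cases $\b>0$ and $m=1$, which the paper leaves implicit.
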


\Cref{lem: sqrt m} limits the potential gain of closed BH relative to MABH, but also the potential loss. In practice, as we will see in \Cref{tab: bh_real_data_discoveries} and \Cref{sec: simulations}, closed BH tends to reject more than MABH when the number of rejections of the latter method is large. This suggests that, although closed BH is not a uniform improvement of MABH in theory, it is a uniform improvement in practice. MABH outperforms closed BH primarily in the unusual situation that most of the p-values are nearly equal. A corollary to \Cref{lem: sqrt m} is that $\b=0$ implies $\bar\r=0$. Therefore, closed BH has the same implied global test as BH. In fact, $\b=0$ implies $\r=0$ for all BH-like methods, since $\v_{k,m}=0$ for all $k$ if $\b=0$.


\section{Simultaneity} \label{sec: simultaneous}

So far, we have used the e-Closure principle to show FDR control of a single set $\R$, as in the classical definition of FDR control of \citet{BenjaminiHochberg1995}. However, \citet{XuSolariFischerDeHeideRamdasGoeman2025} showed that any procedure that controls FDR also controls simultaneous FDR. Let, for some e-collection $(\e_S)_{S \in 2^{[m]}}$, a collection of rejection sets be defined as
\begin{equation} \label{eq: eClosure simultaneous}
\boldsymbol{\mathcal{R}} = \Big\{R \in 2^{[m]}\colon \e_S \geq \frac{|R \cap S|}{(|R| \vee 1)\alpha} \textrm{\ for all $S \in 2^{[m]}$}\Big\}.
\end{equation}
Then we have simultaneous FDR control over all $R \in \boldsymbol{\mathcal{R}}$, i.e., we have
\[
\E \Big( \max_{R \in \boldsymbol{\mathcal{R}}} \frac{|R \cap N|}{|R|\vee 1} \Big) \leq \alpha.
\]
Because the max is inside the expectation, simultaneous FDR control allows a researcher to choose any set $\R$ post hoc from $\RR$, while still controlling FDR for this set.

Such simultaneity may also occur with BH-like methods. To see an example, consider the BH-like method illustrated in \Cref{tab: example thresholded r}. Here, we found $\r=6$, implying that $\R=[6]$ may be rejected. However also $\R' = \{1,2,3,4,5,7\}$ can be rejected. To see why, note that $\p_6$ and $\p_7$ pass and fail exactly the same thresholds $\a_s$, $s \in [m]$, and that membership of in $\RR$ is determined solely by passing and failing of thresholds, not by the ranking of p-values. It follows that $\R'$ is an equivalent rejected set to $[6]$. In addition, we always, trivially, have $\emptyset \in \boldsymbol{\mathcal{R}}$.

Although some simultaneity exists for BH-like methods, as we have just seen, it is actually highly limited. The precise extent of the simultaneity of BH-like methods is characterized by \Cref{thm: hardly simultaneous}, which says that all possible cases of simultaneity exchange one or more rejected hypotheses from $\I_\r$ for the same number of other hypotheses with similar p-values, just as happened in the example. In particular, non-trivial simultaneity is restricted to sets of the same size. Applying \Cref{thm: hardly simultaneous} to the example of \Cref{tab: example thresholded r}, we see that $\boldsymbol{\mathcal{R}} = \{\emptyset, [6], \{1,2,3,4,5,7\}\}$. 

\begin{restatable}{theorem}{hardlysimultaneous} \label{thm: hardly simultaneous}
Let $\boldsymbol{\mathcal{R}}$ be defined in \eqref{eq: eClosure simultaneous}, based on the e-collection with $\e_S$ defined in \eqref{def: e-collection}. We have $R \in \boldsymbol{\mathcal{R}}$ if and only if $R = \emptyset$ or 
\[
|R| = \r,
\]
and, for every $s$,
\[
|\{i \in R\colon \p_i \leq \a_{s}\}| \geq \k_s.
\]
\end{restatable}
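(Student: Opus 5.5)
The proof splits into the two directions, and both rest on the observation that for fixed $s$ the constraints \eqref{eq: eClosure simultaneous} over all $S$ with $|S|=s$ involve the p-values only through the pass/fail indicators $1\{\p_i \leq \a_s\}$. Since $\e_S$ is symmetric in the indices in $S$, I will freely relabel indices within one value of $s$. Throughout I write $q_s(R) = |\{i\in R\colon \p_i \leq \a_s\}|$.

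\emph{``If'' direction.} Let $R$ satisfy $|R|=\r$ and $q_s(R)\geq \k_s$ for all $s$. The case $R=\emptyset$ is trivial since $\e_S\geq 0$. Fix $s$. If $\k_s=0$, then $\v_{\r,s}=r_{0,s}=0\geq \r$ forces $\r=0$, so $R=\emptyset$. Otherwise:
\begin{itemize}
\item Choose a permutation $\pi$ of $[m]$ that lists first $\k_s$ indices of $R$ that pass $\a_s$, then the rest of $R$, then the remaining indices.
\item Set $e_i = 1\{\p_{\pi(i)}\leq \a_s\}/\a_s$. Then $e_i \geq 1/a_{\k_s,s}$ for $i\leq \k_s$, and $\pi^{-1}(R)=[\r]$.
\item By the definition of $\k_s$ and of $\r$, $\k_s\leq \r\leq \v_{\r,s}=r_{\k_s,s}$.
\end{itemize}
\Cref{lem: minimal e} with $k=\k_s$, $j=\r$ gives, for every $S$ with $|S|=s$, $\e_S = \frac1s\sum_{i\in \pi^{-1}(S)} e_i \geq |R\cap S|/(\r\alpha)$. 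Doing this for every $s$ gives $R\in\RR$. This is essentially the argument of \Cref{thm: eS OK}, made permutation-invariant.

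\emph{``Only if'' direction.} Take $R\in\RR$ nonempty and put $j=|R|$. For each $s$, I will test the worst-case $S$. Take $S$ of size $s$ that contains as many elements of $R$ as possible, preferring elements of $R$ that fail $\a_s$, and filling any remaining slots with elements outside $R$ that fail. The inequality $\e_S\geq |R\cap S|/(j\alpha)$ for this $S$, together with the analogous choices that trade failing elements for passing ones, then gives explicit inequalities linking $q_s(R)$, $j$, $s$ and $\a_s$.

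Two facts must be extracted from these inequalities:
\begin{itemize}
\item[(a)] there is $k\leq j$ with $\p_{(k)}\leq a_{k,s}$ and $r_{k,s}\geq j$, i.e.\ $\v_{j,s}\geq j$ for every $s$, which by maximality in \eqref{def: r} yields $j\leq \r$;
\item[(b)] $q_s(R)\geq \k_s$ for every $s$.
\end{itemize}
For (a), the natural candidate is $k=q_s(R)$ (or the largest $k\leq q_s(R)$ with $\p_{(k)}\leq a_{k,s}$). It satisfies $\p_{(k)}\leq \a_s$ because $q_s(R)$ p-values pass $\a_s$. The worst-case inequality should then read exactly as the statement that $a_{k,s}$, with horizon $j$, is at least $\a_s$, i.e.\ as ``$j\leq r_{k,s}$''. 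Here the remark that $a_{k,s}$ was chosen as the \emph{maximal} value making \Cref{lem: minimal e} work is the key: I will prove a converse of \Cref{lem: minimal e}, namely that if the inequality holds for all $|S|=s$ and $R=[j]$, then $j\leq r_{k,s}$ whenever the threshold used is at most $a_{k,s}$, invoking \Cref{lem: aks nondecreasing} and \eqref{eq: ass r_k+1 upper bound} to compare thresholds at different $k$.

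For (b) and the reverse inequality $j\geq \r$, I plan to use the fact that the thresholds $\a_s$ are fixed by $\p$, not by $R$. If $q_s(R)<\k_s$ for some $s$, then the inequality for an $S$ built from the passing elements of $R$ together with failing elements should contradict $\k_s\leq \r$ and the lower bound $\a_s\geq$ the BH-type value $\k_s\alpha/s$. Combined with $q_s(R)\leq j$, this gives $j\geq \k_s$ for all $s$. To push this up to $j\geq \r$, I would use $s$ with $\k_s$ large, in particular $s=m$ where $\a_m=\k_m\alpha/m$.

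\emph{Main obstacle.} The hardest step is the converse of \Cref{lem: minimal e}, i.e.\ turning the family of inequalities into the sharp conclusions $|R|=\r$ and $q_s(R)\geq \k_s$. I expect this to need a careful case analysis on whether $r_{k,s}>k$ in \eqref{def: a}, and on how $j$ compares to $s$ and to $m-s$.
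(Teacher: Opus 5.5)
\textbf{Verdict.} Your ``if'' direction is the paper's own argument: put $R$ first, apply \Cref{lem: minimal e} with $k=\k_s$ and $j=\r$, and use $\k_s\le\r\le r_{\k_s,s}$. That part is fine. The ``only if'' direction, however, has genuine gaps, and its part (b) cannot be completed at all, because that part of the statement is false for general admissible $r_{k,s}$.

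\emph{Lower bound on $|R|$.} For $\r>0$ the step $s=m$ carries no information. The only test there is $S=[m]$, and $\e_{[m]}=|\{i\colon \p_i\le\a_m\}|/(\k_m\alpha)\ge 1/\alpha$ always holds. Moreover $\k_m$ can be far below $\r$: in the running example $\k_7=2$ while $\r=6$. Your ``lower bound'' on $\a_s$ is also reversed, since $r_{k,s}\ge k$ gives $a_{k,s}\le k\alpha/s$. The missing observation is that $r_{k,1}=k$ forces $\k_1=\r$ and $\a_1=\r\alpha$. So if $|R|<\r$, a singleton $S=\{i\}$ with $i\in R$ gives $\e_{\{i\}}\le 1/(\r\alpha)<1/(|R|\alpha)$, which is how the paper argues.

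\emph{Upper bound on $|R|$.} A per-$s$ converse of \Cref{lem: minimal e} is false. The tests with $|S|=s$ see only the single threshold $\a_s=a_{\k_s,s}$ and the set $K_s$ of p-values below it. The statement ``$a_{k,s}$ with horizon $j$ is at least $\a_s$'' translates into $j\le r_{k,s}$ only when $a_{k,s}=\a_s$.
\begin{itemize}
\item Take closed BH with $m=7$ (\Cref{tab: example r}) and $\p=(0.1,0.1,0.1,0.9,0.9,3,3)\alpha$, with $\alpha\le 1/3$.
\item Then $\r=3$, $\a_3=\alpha$ and $K_3=[5]$.
\item Every $S$ with $|S|=3$ meets $[5]$, so $\e_S=|S\cap[5]|/(3\alpha)\ge|S\cap[6]|/(6\alpha)$. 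Hence $R=[6]$ passes all tests at $s=3$.
\item Yet $\v_{6,3}=5<6$: your witness $k=q_3(R)=5$ has $r_{5,3}=5<6$.
\end{itemize}
The paper does not argue step by step. It uses the single step $s$ of \Cref{lem: critical s}, where $\r_{\k_s,s}=\r$ (the horizon is exactly $\r$) and $\p_{(\k_s+1)}>\a_s$ (exactly $\k_s$ p-values pass). In the main case $\k_s+s>m+1$, any $S\supseteq[m]\setminus\I_{\k_s}$ with $|S|=s$ then gives $\e_S=(\r+s-m)/(\r\alpha)<(|R|+s-m)/(|R|\alpha)\le|R\cap S|/(|R|\alpha)$ whenever $|R|>\r$.

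\emph{Step (b) is false in this generality.} If more than $\k_s$ p-values lie below $\a_s$, or if $r_{\k_s,s}>\r$, the tests with $|S|=s$ need not detect $|\{i\in R\colon\p_i\le\a_s\}|<\k_s$. For instance, suppose $\k_{m-1}=\r=r_{\r,m-1}$ and $\r+1$ p-values are at most $\a_{m-1}=\r\alpha/(m-1)$. Then every $S$ with $|S|=m-1$ contains $\r$ of them, so $\e_S\ge 1/\alpha$, and every $R$ with $|R|=\r$ passes.

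This is realizable, with the following choices.
\begin{itemize}
\item Take $m=20$ and $\alpha<1/20$.
\item Horizons: $r_{k,s}=k$ for $s\le 16$ and $s=19$; $r_{5,17}=6$, $r_{6,17}=12$, $r_{k,17}=20$ for $k\ge 7$; $r_{4,18}=6$, $r_{k,18}=20$ for $k\ge5$.
\item These are admissible: the binding bounds are $b_{5,17}=6$, $b_{6,17}=12$ and $b_{4,18}=6$.
\item P-values: $\p_1=\dots=\p_5=0.01\alpha$, $\p_6=\p_7=0.31\alpha$, $\p_8=0.36\alpha$, $\p_9=\dots=\p_{20}=1$.
\end{itemize}
What follows from these choices:
\begin{itemize}
\item $\v_{k,17}=6$ for all $k\ge 5$, and $\v_{6,s}\ge 6$ for every $s$, so $\r=6$.
\item $\k_s=6$ for $s\le16$ and $s=19$, and $\k_s=5$ for $s\in\{17,18,20\}$.
\item $R=\{1,\dots,5,8\}$ meets the theorem's condition at every $s\ne 19$, so your ``if'' argument covers all $|S|\ne 19$.
\item At $s=19$, seven p-values lie below $\a_{19}=6\alpha/19$, so every test with $|S|=19$ is passed.
\end{itemize}
Hence $R\in\RR$, although only five of its elements pass $\a_{19}$.

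The paper's proof of \Cref{lem: sufficient curly R} has the same hole. It uses $|S\cap K|=\k_s+s-m$ and horizon $\r$ in $\a_s$, and both are guaranteed only at the critical step. What your outline can actually deliver is the ``if'' direction, together with $|R|=\r$ for every nonempty $R\in\RR$, once you use $s=1$ and the critical step.
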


Note that $R= \I_\r$ fulfills the criteria of the theorem by definition. For some configurations of $r_{k,s}$, the second condition of \Cref{thm: hardly simultaneous} can be fulfilled only for $R = \I_\r$. MABH is an example of this, and \citet{XuSolariFischerDeHeideRamdasGoeman2025} already proved that $\boldsymbol{\mathcal{R}} = \{\emptyset, \I_\r\}$ for MABH.

The simultaneity offered by \Cref{thm: hardly simultaneous} is much more limited than the simultaneity that has been achieved by improving some other methods through the e-Closure Principle. For the closed versions of the methods of \citet{BenjaminiYekutieli2001}, \citet{Su2018} and \citet{WangRamdas2022}, \citet{XuSolariFischerDeHeideRamdasGoeman2025} showed in particular that any additional rejections in the closed method relative to the original are optional: the researcher may choose post hoc whether to use the new, larger rejected set or to retain the original. In contrast, closed BH replaces the original BH-rejected set by a different, potentially larger random set. The researcher using closed BH may not decide post hoc to take the BH-rejected set after all. Closed BH is therefore a uniform improvement of BH in the classical sense that requires $\R'\supseteq \R$ and $\R'\supset \R$ for some $\p$, since $\bar \r \geq \b$, but not according to the simultaneity-focused definition of \citet{XuSolariFischerDeHeideRamdasGoeman2025} that requires $\RR'\supseteq \RR$ and $\RR'\supset \RR$ for some $\p$, since generally $\I_\b \notin \RR$. Moreover, the collection $\RR$ for the methods in \citet{XuSolariFischerDeHeideRamdasGoeman2025} typically contains sets of different sizes, while the simultaneity of closed BH is strictly limited to sets of the same size. We can say that the simultaneity offered by \Cref{thm: hardly simultaneous} is essentially negligible.

\section{Algorithms} \label{sec: algorithms}

Algorithm \ref{alg:cBH} below implements the general BH-like procedure, calculating the number of hypotheses rejected. A naive implementation following the construction of \Cref{sec: BHlike description} would take $O(m^2)$ time and memory. The implementation of Algorithm \ref{alg:cBH} reduces memory use to $O(m)$ and computation time to about $O(m\log(m) + \r^2)$ after sorting.

The algorithm works as follows. It starts by considering $[m]$ as candidate values for $\r$. These are stored in a doubly linked list for $O(1)$ removal of elements during traversal, while always keeping track of the maximum. Iterating through all $s \in [m]$, it discards candidate values that do not meet the criterion $\v_{k,s} \geq k$. Iteration on $s$ proceeds in breadth-first order, i.e., roughly considering integers close to $m/2$, $3m/4$, $m/4$, $7m/8$, $5m/8$, \ldots, until all $s\in[m]$ are enumerated. Exploring more diverse values of $s$ in this way makes it likely that the candidate set shrinks faster. Within each $s$, the algorithm iterates on $k$, starting from $k=m-s+1$. This late start is allowed because, for fixed $k$, $\p_{(k)} \leq a_{k,s}$ for $k+s=m+1$, implies 
\[
\p_{(k)} \leq a_{k,s} = \frac{k\alpha}{s} \leq \frac{k\alpha}{s'} = a_{k,s'}
\]
for all $s'<s$. Therefore, if $k$ fails the criterion at $s'$, it also fails at the boundary $s=m-k+1$. It follows that it is sufficient for the algorithm to check only the region $k+s \geq m+1$. Iteration on $k$ can stop after $k$ reaches the largest remaining value in the candidate set, since $\r \geq k$. Iteration on $k$ can also stop when the cumulative maximum $\v_{k,s}$ exceeds the largest remaining value in the candidate set, since in that case no more values will be removed from the candidate set. The algorithm achieve approximately $O(m\log (m) + \r^2)$, since large values are pruned relatively quickly in about $O(m\log m)$ time because of the breadth-first search. Next, verification requires checking about $m$ values of $s$ and $\r$ values of $k$, but the late start reduces this burden from $O(m\r)$ to $O(\r^2)$. 

\begin{algorithm}[!ht]
\caption{General BH-like procedure} \label{alg:cBH}
\KwIn{$\p_1, \ldots, \p_m$: p-values; $\alpha$: significance level; $\mathtt{next.r}$: horizon function}
\KwOut{Number of rejections}

Sort $\p_{(1)} \leq \cdots \leq \p_{(m)}$\;
Let $\tilde{p}_k = \p_{(k)} / \alpha$ for $k = 1, \ldots, m$\tcp*[r]{embed $\alpha$ into the p-values}
Initialize ordered set $K = [m]$\tcp*[r]{doubly linked list; $O(1)$ max and removal}

\BlankLine
\ForEach(\tcp*[f]{breadth-first order; see text}){$s \in \{1, \ldots, m\}$}{

  $k \gets m - s + 1$\;
  $r \gets r_{k,s}$\tcp*[r]{initialize horizon; $m$ if $s=m$; $k$ otherwise} 
  $v \gets 0$\tcp*[r]{initialize cumulative maximum}

  \BlankLine
  \While{$k \leq \max K$}{
      $a \gets \dfrac{(k + s - m)\, r}{(r + s - m)\, s}$\tcp*[r]{critical value $a = a_{k,s}/\alpha$}

    \BlankLine
    \lIf(\tcp*[f]{cumulative maximum $v = \v_{k,s}$}){$\tilde{p}_k \leq a$}{$v \gets \max(v, r)$}

    \BlankLine
    \If{$v < k$  \textbf{\emph{and}} $k \in K$}{
      Remove $k$ from $K$\tcp*[r]{$k$ fails the criterion; guard against re-removal}
      \lIf{$K = \emptyset$}{\Return $0$}
    }
    \lIf(\tcp*[f]{all remaining $k$ survive this $s$}){$v \geq \max K$}{\textbf{break}}
    \BlankLine
    $k \gets k + 1$\tcp*[r]{advance one step (may pass over removed indices)}
    $r \gets \mathtt{next.r}(r, m, s, k)$\tcp*[r]{incremental horizon update}
  }
}

\Return{$\max K$}\;
\end{algorithm}

Instead of counting rejections at fixed $\alpha$, we may also calculate FDR-adjusted $p$-values. The $i$th FDR-adjusted p-value $\q_i$ is defined as the smallest $\alpha$-level for which $H_i$ is rejected, i.e.,
\[
\q_i = \min\{\alpha'\colon \{i\} \in \R_{\alpha'}\}, 
\]
where we emphasize the dependence of the rejected set $\R$ on $\alpha$. FDR-adjusted p-values are often misunderstood. It is important to realize that FDR-adjusted p-values cannot be interpreted as a property of $H_i$ in isolation, but are always a joint property of the rejected set and the hypothesis \citep[Section 5.4]{GoemanSolari2014}.

The adjusted p-value Algorithm \ref{alg:cBH:adjust}, given in pseudocode in Appendix \ref{sec: alg adjusted}, follows a similar structure to Algorithm \ref{alg:cBH}, but instead of removing candidates, it computes for each $k$ the minimal significance level $\alpha'$ at which $k$ would survive all $s$. At $s$, hypothesis $k$ survives if there exists $k' \leq k$ with $r_{k',s} \geq k$ and $\p_{(k')} \leq a_{k',s}$. The minimal $\alpha'$ for $k$ to survive step $s$ is therefore 
\[
\min_{k' \leq k,\, r_{k',s} \geq k} \frac{\p_{(k')}}{\tilde a_{k',s}}, 
\]
where $\tilde a_{k,s} = a_{k,s}/\alpha$. The $\q_{(k)}$, i.e., the minimal $\alpha'$ for $k$ to survive all steps, is the maximum of this quantity over $s$. Since $r_{k',s}$ is non-decreasing in $k'$, the constraint $r_{k',s} \geq k$ defines a window $[l, k]$ whose left edge $l$ advances as $k$ increases. The minimum of $\p_{(k')}/\tilde a_{k',s}$ over this sliding window is maintained efficiently using a monotone double-ended queue (deque), giving $O(1)$ cost per step. Unlike the candidate-set algorithm, no elements are removed, so a breadth-first ordering of $s$ would give no pruning advantage; instead, $s$ is simply iterated from $m$ down to $1$. The early stopping rules from the candidate-set algorithm also no longer apply, making the worst-case computational complexity $O(m^2)$, although memory use is $O(m)$. Since the procedure returns the largest survivor rather than individual survival decisions, a final pass takes the cumulative minimum from the right: $\q_{(k)} \gets \min_{k' \geq k} \q_{(k')}$.

Like other $\pi_0$-adaptive FDR controlling methods, BH-like methods can reject hypotheses with p-values greater than $\alpha$, and as a consequence adjusted p-values can be smaller than unadjusted. This can be confusing for practitioners. Algorithm \ref{alg:cBH:adjust} therefore has the option to restrict adjusted p-values to be at least as large as unadjusted, which is enabled by default.

Both algorithms are implemented in the \texttt{eClosure} R package, available on cran.

\section{Applications and simulations} \label{sec: simulations}

\subsection{Simulation}

A small simulation experiment was performed to investigate the improvement in power of closed BH versus BH and MABH. Data were generated from an equi-correlated standard normal distribution with correlation $\rho$ and dimension $m=200$. To the first $\pi_1m$ coordinates, with $\pi_1 = 1-\pi_0$, a signal was added of magnitude
\[
\mu = \Phi^{-1}\bigg(1 - \frac{t\alpha(1-\pi_0)}{1-\pi_0\alpha}\bigg) - \Phi^{-1}(1-t),
\]
for a chosen target power of $t$, where $\Phi^{-1}$ is the standard normal quantile function. This size of signal gives an approximately constant average power over $\pi_0$ of magnitude $t$ for BH if $\rho=0$ \citep{StoreyTaylorSiegmund2004} if one-sided tests are used. In the two-sided case similar constant average power is achieved if the value of $\mu$ is taken as the solution of 
\[
1-t = \Phi\bigg(\Phi^{-1}\bigg(1-\frac{t\alpha(1-\pi_0)}{2(1-\pi_0\alpha)}\bigg) - \mu\bigg) - \Phi\bigg(-\Phi^{-1}\bigg(1-\frac{t\alpha(1-\pi_0)}{2(1-\pi_0\alpha)}\bigg) - \mu\bigg),
\]
where $t$ again is the target power. Next, one-sided and two-sided p-values were calculated. The parameters were chosen as $\rho = 0, 0.3, 0.6, 0.9$, $t=0.2, 0.4, 0.6, 0.8$, and $\pi_0 = 0.1, 0.2, \ldots, 0.9$. The value of $\alpha$ was fixed at 0.05. All results were based on $10^5$ simulation runs.

\begin{figure}[!ht]
\centering
\includegraphics[width = 0.95\textwidth]{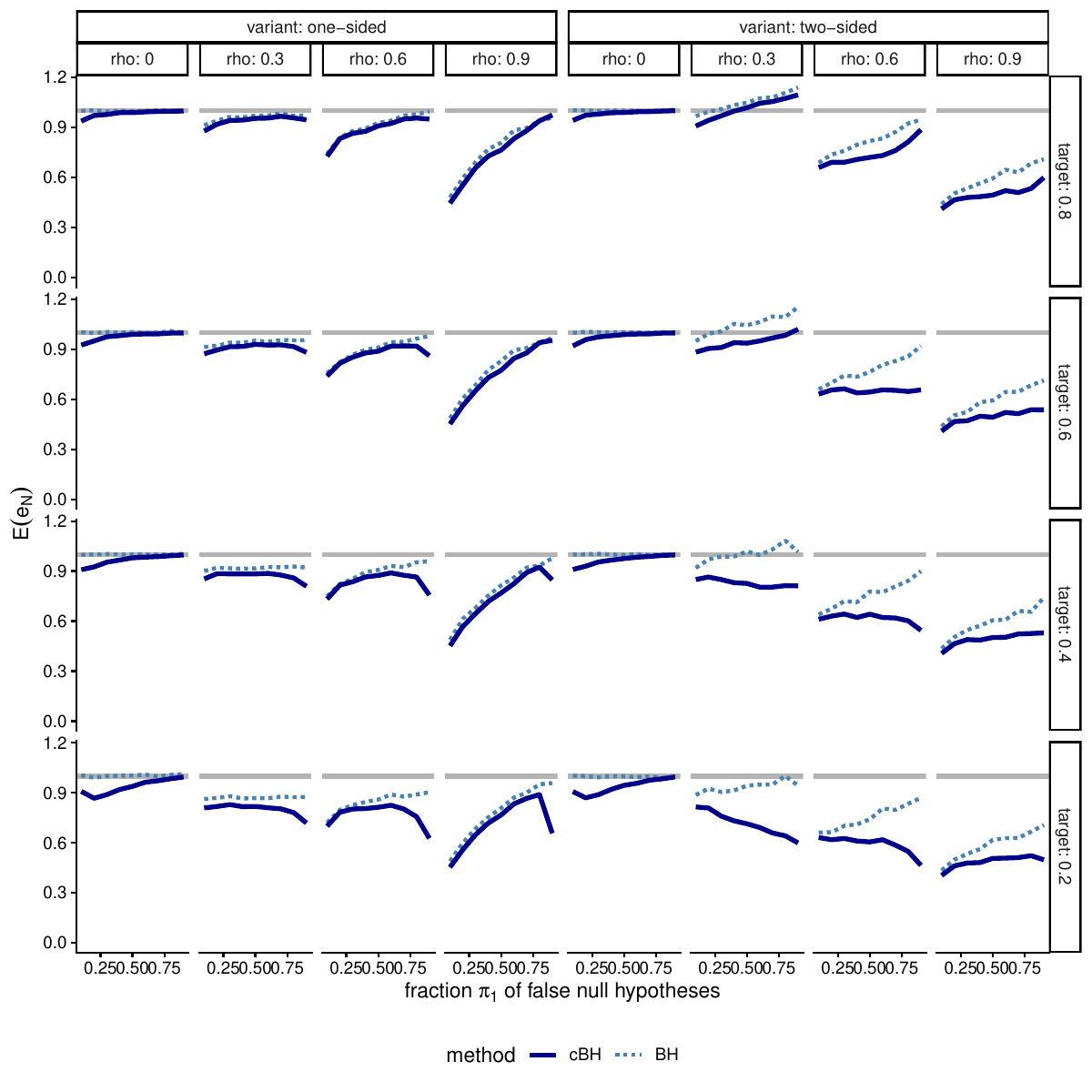}
\caption{$\E(\e_N)$ for closed BH (``cBH'') and BH for the simulation settings of \Cref{sec: simulations}. Here, \emph{target} is the parameter $t$.} \label{fig: msc}
\end{figure}

First, we checked \Cref{assumption} for closed BH and the corresponding assumption \eqref{eq: min suf BH} for BH. The results are given in \Cref{fig: msc}. It is known that PRDS holds for p-values from one-sided normal tests with positive correlations \citep{Sarkar2008PRDS}, but not for the two-sided case \citep{FithianLei2022}. Since PRDS is sufficient for \Cref{assumption}, we see $\E(\e_N) \leq 1$ in \Cref{fig: msc} everywhere for one-sided tests. For the two-sided case, we see some mild violations for small $\rho$, high power $t$ and high $\pi_1$ (low $\pi_0$). From this plot and \Cref{thm: FDR gamma} we might expect some violation of FDR control, but to a maximal level of about 0.055, since we never observe $\gamma > 1.1$. Interestingly closed BH tends to keep $\E(\e_N)$ lower than BH. This may relate to $\k_{|N|}$ depending on all of $\r_{k,s}$ rather than only on $\r_{k, |N|}$.

\begin{figure}[!ht]
\centering
\includegraphics[width = 0.95\textwidth]{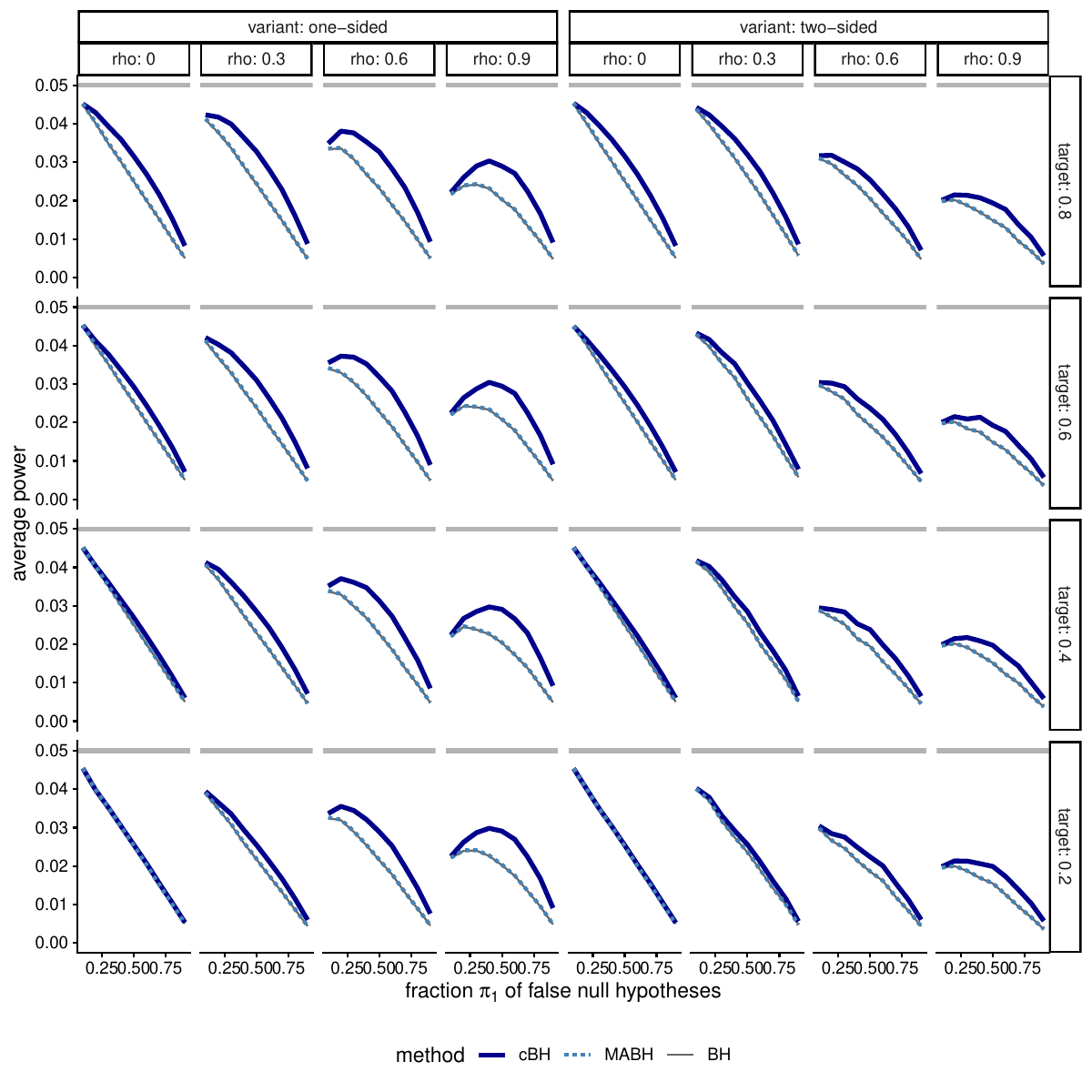}
\caption{Realized FDR of closed BH (``cBH'') versus MABH and BH for the simulation settings of \Cref{sec: simulations}.} \label{fig: fdr}
\end{figure}

\Cref{fig: fdr} gives the realized FDR for closed BH, MABH and BH. This figure confirms that \Cref{assumption} is sufficient but not necessary for FDR control. We see that FDR control is below nominal level for all simulation scenarios with $\E(\e_N)\leq 1$ from \Cref{fig: msc}, but also in the problematic cases. To understand why FDR can still be controlled while $\E(\e_N) > 1$, note that some $e_S$ with $S \neq N$ may take low values due to imperfect power. Such low e-values reduce the number of rejections, resulting in a lower power but also in a lower realized FDR. This is part of the explanation of the robustness of BH, which seems to translate, at least in this example, to closed BH.

Average power is given in \Cref{fig: ap}. We see that MABH is all but indistinguishable from BH, as advertised \citep{SolariGoeman2017}, since $m$ is not small. Closed BH, however, obtains moderate to substantial gains relative to BH/MABH. These gains are greatest when $\pi_0$ is small and when $\rho$ is small, and when $t$ is large, i.e., in situations when the number of rejections of BH/MABH is substantial, as expected from \Cref{lem: sqrt m}. Although closed BH is not a uniform improvement of MABH by \Cref{lem: MABH admissible}, there is never a noticeable power loss against the latter method.

\begin{figure}[!ht]
\centering
\includegraphics[width = 0.95\textwidth]{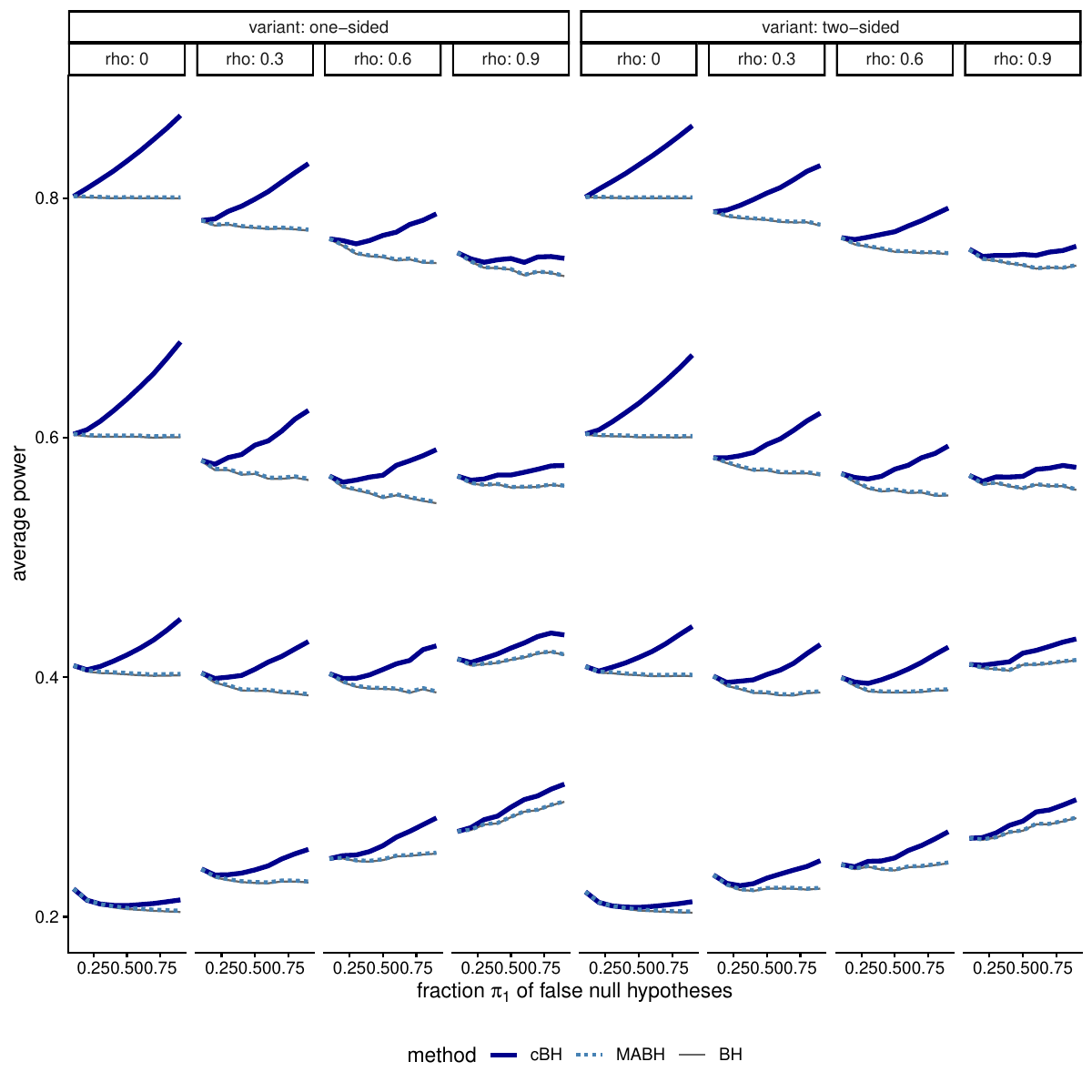}
\caption{Average power of closed BH (``cBH'') versus MABH and BH for the simulation settings of \Cref{sec: simulations}.} \label{fig: ap} 
\end{figure}

\subsection{Applications}

Next, adjusted p-values were calculated for several data sets used for illustration in well-known publications on multiple testing, following \citet{XuSolariFischerDeHeideRamdasGoeman2025}. The number of rejected hypotheses for closed BH versus BH is given in \Cref{tab: bh_real_data_discoveries}. \Cref{fig: adjusted} gives the adjusted p-values of closed BH versus BH. The plot gives both the variant that constrains the adjusted p-values to be never lower than the unadjusted and the unconstrained variant.

\begin{figure}[!ht]
\centering
\includegraphics[width = 0.8\textwidth]{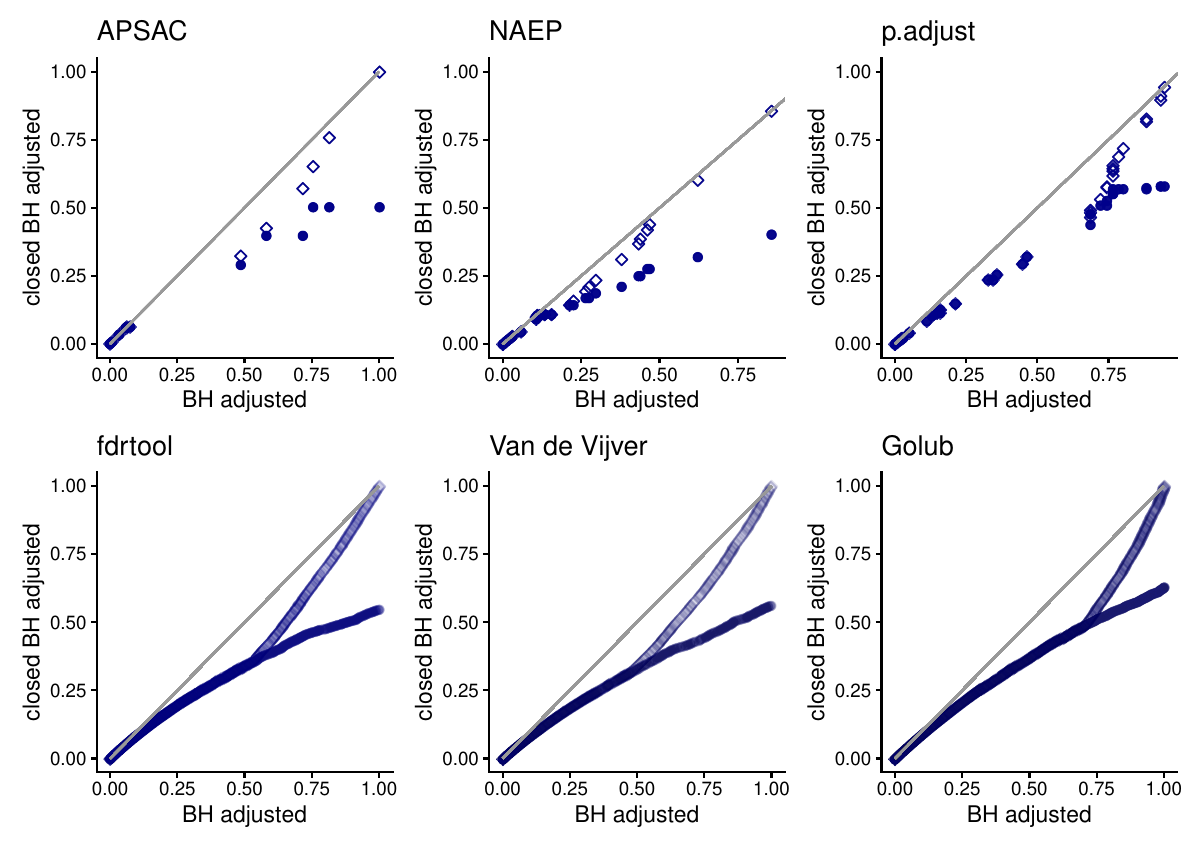}
\caption{Adjusted p-values with closed BH versus BH for the 6 data sets of \Cref{tab: bh_real_data_discoveries}. The open dots (upper curve) represent the adjusted p-values constrained to be not smaller than the unadjusted. The filled dots (lower curve) give the unconstrained adjusted p-values.} \label{fig: adjusted}
\end{figure}

The difference between the closed BH and BH adjusted p-values increases with the p-values. The most noticeable difference is the maximal adjusted p-value, which in BH is always simply the maximal p-value. Closed BH, in contrast, may already reject all hypotheses when the $\alpha$-level is much smaller than that. In contrast, the smallest adjusted p-value of closed BH is always equal to that of BH, since $\b=0$ implies $\bar\r=0$ for all $\alpha$.

\section{Discussion}

This paper derived closed BH, a uniform improvement of the method of Benjamini and Hochberg that has proven validity under PRDS, and under a more lenient minimal sufficient condition implied by PRDS. The power gain relative to BH is appreciable. It can be especially substantial if the proportion $\pi_1$ of false null hypotheses is large. Because the improvement is uniform, a researcher aiming to maximize the number of rejected hypotheses, while controlling FDR at $\alpha$ under PRDS, should always prefer closed BH over BH. The additional cost is only computational. It is true that closed BH offers no meaningful simultaneity, but so does BH, so there is nothing lost there.

The improvement was not derived in the classical way, by applying BH on an adjusted level $\alpha/\hat\pi_0$, for a suitable estimate $\hat\pi_0$ of $\pi_0$. Rather, the derivation proceeded through the e-Closure Principle of \citet{XuSolariFischerDeHeideRamdasGoeman2025}. The improvement mechanism, therefore, is different from that of the plug-in methods. Rather than using the entire p-value distribution, it exploits the event that some p-values are much smaller than needed for rejection to make rejection of later (larger) p-values easier. This has something of the flavor of a step-down method, but without  the $\alpha$-level adjustment that was needed for step-down methods so far \citep{BenjaminiKriegerYekutieli2006, GavrilovBenjaminiSarkar2009}.

Closed BH was shown to be valid under PRDS, but also under a milder minimal sufficient condition. This minimal sufficient condition follows directly from the e-Closure principle, which says that the validity of a method derived from e-Closure rests on the validity of a single e-value $\e_N$ \citep{XuSolariFischerDeHeideRamdasGoeman2025}. If the assumptions of the method are violated and $e_N$ is not an e-value, the expectation of $e_N$ can be used to bound the anti-conservativeness of the method. This property can hopefully be used in the future to extend the scope of closed BH beyond PRDS and to investigate its robustness. Still, even the new condition is only sufficient and not necessary for FDR control, and we have seen situations in which FDR control was achieved even when the minimal sufficient condition was not met. This latter property also holds for BH, which is known to be quite robust against violation of PRDS. It remains to be investigated whether closed BH has a similar robustness.

Unlike other novel FDR-control methods derived from e-Closure so far \citep{XuSolariFischerDeHeideRamdasGoeman2025, LiGoeman2026}, closed BH does not offer much in terms of simultaneity of FDR control. This is a substantial drawback of the method, which apparently emphasizes power over simultaneity. Lack of simultaneity means that known problems surrounding non-simultaneous FDR control with BH remain relevant for closed BH. In particular, researchers may not reduce the set of rejected hypotheses of closed BH in any way while still claiming FDR control \citep{FinnerRoters2001, GoemanSolari2014, EbrahimpoorGoeman2021, KatsevichSabattiBogomolov2023}. Future alternative uniform improvements of BH that emphasize simultaneity rather than power could therefore be more relevant than closed BH for some application areas.

Care must also be taken when interpreting FDR-adjusted p-values. Although BH never gives adjusted p-values that are smaller than their unadjusted counterpart, this property is not a feature of FDR methods in general, and most $\pi_0$-adaptive FDR control methods return adjusted p-values smaller than unadjusted. This can be confusing for researchers, and this confusion stems from an overinterpretation of adjusted p-values \citep[Section 5.4]{GoemanSolari2014}. To avoid some of this confusion, adjusted p-values are capped by default in the software in such a way that they cannot be smaller than unadjusted p-values. 

Closed BH is a member of a whole class of BH-like methods which all control FDR under PRDS. Closed BH was motivated as the most ``ambitious'' member of this class under the contraint that it should uniformly improve BH, i.e., the furthest such uniform improvement from MABH, which is at the other extreme of the same class, and which is known to have low power. It is worthwhile investigating the class as a whole, choosing the parameters $r_{k,s}$ to achieve certain specific goals. For example, if a researcher is not interested in rejecting more than, say, $n$ hypotheses, then capping $r_{k,s}$ at $n$ for $k \leq n$ may gain some power. It is also interesting to investigate whether BH-like methods that do not uniformly improve BH, i.e., which do not fulfil the condition of \Cref{lem: improve BH}, might gain substantial power in a trade-off. It is conceivable that not all BH-like methods are admissible, and a criterion for admissibility of BH-like methods would be of interest.

The class of BH-like methods encompasses all uniform BH improvements valid under PRDS in the literature, and even all uniform improvements under independence the author is aware of. It is unclear whether any other improvements of BH are possible, and whether the class of BH-like methods itself is admissible. Certainly, the BH-like methods are far less elegant than BH itself, and we might hope that other, more elegant uniform improvements might exist. 

Given the completeness of e-Closure for FDR control, any further improvements or alternative methods should benefit from consideration of the e-Closure Principle. The construction of the closed BH method in this paper is, first and foremost, a demonstration of the versatility and power of e-Closure as the central design principle for multiple testing methods.

\appendix

\section{Deferred proofs} \label{sec: proofs}

This section contains the formal proofs of the lemmas in the paper, with some additional helper lemmas. 

\begin{lemma} \label{lem: helper bks}
$b_{k,s} < m$ implies $s < m-1$ and $k+s > m+1$.
\end{lemma}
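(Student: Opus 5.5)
We argue by contraposition from the definition of $b_{k,s}$. By definition, $b_{k,s} = m$ unless the denominator $D = (k+s-m)(m-s) - r_{k-1,s}$ is strictly positive. Moreover, if $D>0$, then $b_{k,s} = m \wedge \frac{(m-s)(k+s-m-1)r_{k-1,s}}{D}$. So $b_{k,s}<m$ forces $D>0$ together with a strictly positive numerator-type quantity. We will show that $D>0$ alone forces $s<m-1$ and $k+s>m+1$, possibly using the requirement that the fraction is less than $m$ in boundary cases.

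\emph{Step 1 ($s<m-1$).} If $s=m$, then $m-s=0$ and $D=-r_{k-1,s}\le 0$, since $r_{k-1,s}\ge k-1\ge 0$ by \eqref{eq: ass r greater k}; hence $b_{k,s}=m$. If $s=m-1$, then $D=(k-1)-r_{k-1,s}\le 0$, because $r_{k-1,s}\ge k-1$ by \eqref{eq: ass r greater k}; hence again $b_{k,s}=m$. So $b_{k,s}<m$ implies $s\le m-2$.

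\emph{Step 2 ($k+s>m+1$).} Write $j=k+s-m$. If $j\le 0$, then $D=j(m-s)-r_{k-1,s}\le 0$ because $m-s\ge 0$ and $r_{k-1,s}\ge 0$. If $j=1$, then $D=(m-s)-r_{k-1,s}$ may be positive, but the numerator $(m-s)(j-1)r_{k-1,s}$ vanishes, so $b_{k,s}=m\wedge 0=0$. This $j=1$ case is the delicate one: it would give $b_{k,s}=0<m$, contradicting the claim as literally stated.

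To resolve it, I would use the initialization \eqref{eq: ass r initialization}. For $k+s=m+1$ we have $k-1+s\le m+1$, so $r_{k-1,s}=k-1$ (for $s<m$). Then $D=(m-s)-(k-1)=(m-s)-(m-s)=0$, which is not positive, so $b_{k,s}=m$. The same substitution $r_{k-1,s}=k-1$ also handles $j\le 0$ and the case $s=m-1$ cleanly. The main obstacle is exactly this bookkeeping: whether $r_{k-1,s}$ is fixed by the initialization or free. For $k+s\le m+1$ the initialization gives $r_{k-1,s}=k-1$, except when $k-1=1$ and $s=m$, which Step 1 already excludes. With that, $D>0$ holds only when $j\ge 2$ and $s\le m-2$, which proves the lemma.
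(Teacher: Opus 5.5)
Your proof is correct and uses the same key ingredient as the paper: the bound $r_{k-1,s}\ge k-1$ from \eqref{eq: ass r greater k} forces the denominator $(k+s-m)(m-s)-r_{k-1,s}$ to be $\le 0$ outside the region $s<m-1$, $k+s>m+1$, and the paper simply packages your casework into the single inequality $0\le r_{k-1,s}-k+1<(k+s-m-1)(m-s-1)$ followed by a sign argument. Two corrections: the case $k+s=m+1$ was never delicate, because that same bound gives $r_{k-1,s}\ge k-1=m-s$ and hence a nonpositive denominator without any appeal to the initialization; and your closing aside is wrong for $s=m-1$, since $r_{k-1,m-1}$ lies in the free region once $k\ge 4$ (e.g.\ $r_{3,6}=4$ for $m=7$), so there only the inequality used in your Step 1 is available.
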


\begin{proof}
If $b_{k,s} < m$ then $r_{k-1,s} < (k+s-m)(m-s)$. This implies by \eqref{eq: ass r greater k} that
\[
0 \leq r_{k-1,s} - k + 1  < (k+s-m-1)(m-s) + (m-s) - k +1= (k+s-m-1)(m-s-1).
\]
Either both terms are strictly positive or both are strictly negative. Both negative implies $s=m$ and hence $k < 1$, which is inconsistent with the premise $k>0$. So both are strictly positive and the result follows.
\end{proof}

\bkviablefirst*

\begin{proof}
If $b_{k,s} = m$ the result is immediate from \eqref{eq: ass r greater k}. Suppose $b_{k,s} < m$. Then since $r_{k-1,s} \geq k-1 > m-s$ by \eqref{eq: ass r greater k} and \Cref{lem: helper bks}, we have 
\[
(k+s-m-1)(m-s) \geq (k+s-m)(m-s) - r_{k-1,s}.
\]
Since $b_{k,s} < m$ implies $r_{k-1,s} < (k+s-m)(m-s)$, we get 
\[
b_{k,s} = \frac{(m-s)(k+s-m-1)}{(k+s-m)(m-s) - r_{k-1,s}}r_{k-1,s} \geq r_{k-1,s}.
\]
\end{proof}

\bkviablesecond*

\begin{proof}
If $b_{k,s}=m$, the result is immediate. Suppose $b_{k,s}<m$. Note that $b_{k,s}$ is non-decreasing in $r_{k-1,s}$ since the numerator increases in $r_{k-1,s}$ and the denominator is positive and decreasing in $r_{k-1,s}$. Since $r_{k-1,s} \geq k-1$ by \eqref{eq: ass r greater k}, we get, using that $k+s-m-1>0$ and $m-s-1>0$ due to \Cref{lem: helper bks}, 
\begin{align*}
b_{k,s} &\geq \frac{(m-s)(k+s-m-1)(k-1)}{(k+s-m)(m-s) - k+1} = 
\frac{(m-s)(k+s-m-1)(k-1)}{(k+s-m)(m-s-1) + (k+s-m) - k+1} \\ &= 
\frac{(m-s)(k-1)}{m-s-1} = k - 1 + \frac{k-1}{m-s-1} \geq k.
\end{align*}
\end{proof}

\ksmonotone*

\begin{proof}
Let $\p'\geq \p$ coordinatewise, and let $\r_{k,s}'$, $\v_{k,s}'$, $\r'$ and $\k_s'$ be the values of $\r_{k,s}$, $\v_{k,s}$, $\r$ and $\k_s$ calculated using $\p'$. Note that $\p'\geq \p$ coordinatewise implies also that $\p'_{(k)} \geq \p_{(k)}$. We have $\r_{k,s}' \leq \r_{k,s}$ by definition since $a_{k,s}$ does not depend on $\p$. We have $\v_{k,s}' \leq \v_{k,s}$ by the order-preservation of $\max$. $\r'\leq \r$ follows since $\v_{k,s}' \geq k$ implies $\v_{k,s} \geq k$. 
Finally, since both $\{0\leq k \leq \r'\} \subseteq \{0 \leq k \leq \r\}$ and $\{\p_{(k)}' \leq a_{k,s}\} \subseteq \{\p_{(k)} \leq a_{k,s}\}$, and since the maximum over a smaller set is never larger, we have $\k_s' \leq \k_s$.
\end{proof}

\aksmonotone*

\begin{proof}
We need to show that $a_{k,s} \geq a_{k-1,s}$. Consider the case $k\leq m-s+1$ first. We have immediately
\[
a_{k,s} = \frac{k}{s}\alpha \geq \frac{k-1}{s}\alpha = a_{k-1,s}.
\]
Next, consider the case $k > m-s+1$. Write $d=m-s$; $c = k-d = k+s-m$; $r' = r_{k,s}$ and $r= r_{k-1,s}$. We have 
\[
a_{k,s} = \frac \alpha s \bigg(c + \frac{cd}{r'-d}\bigg),
\]
both when $k=m-s+1$ and when $k > m-s+1$. Therefore,
\[
\frac s \alpha (a_{k,s} - a_{k-1,s}) = 1 + \frac{cd}{r'-d} - \frac{d(c-1)}{r-d},
\]
which means we need to show that
\[
\frac{c}{r'-d} - \frac{c-1}{r-d} \geq -\frac{1}{d}.
\]

By \eqref{eq: ass r_k+1 upper bound} we have
\begin{align*}
r' \leq \frac{(c-1)dr}{cd - r} 
= \frac{cdr - dr}{cd - r} 
= \frac{cd^2 - dr + cdr - cd^2}{cd - r}  
= \frac{d(cd-r) + cd(r - d)}{cd - r} 
= d + \frac{cd(r - d)}{cd - r}.
\end{align*}
Therefore, since $c = k+s-m > 0$ and $r' - d \geq k+s-m >0$ by \eqref{eq: ass r greater k}, we have 
\[
\frac{c}{r'-d} \geq \frac{cd - r}{d(r - d)} 
= \frac{(c-1)d - (r-d)}{d(r - d)} \\
= \frac{c-1}{r - d} - \frac 1d.
\]
Note that we have equality if $r' = b_{k,s} < m$.
\end{proof}

\minimale*

\begin{proof}
We distinguish between the cases $k \leq m-s+1$ and $k > m-s+1$. Write $t=|[k] \cap S|$, and $r=r_{k,s}$.

Let $k \leq m-s+1$. Then $r = k$, so we only need to consider $j=k$.
\[
\frac1s \sum_{i \in S} e_i \geq \frac ts \frac{s}{k\alpha} = \frac{t}{k\alpha} = \frac{S \cap [j]}{j\alpha} = \frac{|R \cap S|}{|R|\alpha}.
\]

Let $k > m-s+1$. We have $t \geq s - (m-k) = k+s-m$, because at most $m-k$ elements of $S$ fit into $[m]\setminus [k]$. We have
\[
\frac1{|S|} \sum_{i \in S} e_i \geq \frac ts  \frac{(r+s-m)s}{(k+s-m)r\alpha} = \frac{(r+s-m)t}{(k+s-m)r\alpha}.
\]
The set $S \cap [j]$ contains at most $t$ elements from $[k]$ and at most $j-k$ elements from $[j]\setminus [k]$. Therefore, 
\[
\frac{|R \cap S|}{|R|\alpha} \leq \frac{t+j-k}{j\alpha}.
\]
It suffices to show that, for $k \leq j \leq r$, we have  
\[
\frac{(r+s-m)t}{(k+s-m)r\alpha} \geq \frac{t+j-k}{j\alpha},
\]
or equivalently (dropping $\alpha$) that
\[
h(j) = (r+s-m)tj - (k+s-m)(t+j-k)r \geq 0.
\]
This function is linear in $j$, so we only need to check the endpoints. We have 
\[
h(k) = (r+s-m)tk - (k+s-m)tr = t(m-s)(r-k) \geq 0,
\]
since $r \geq k$, and
\begin{eqnarray*}
h(r) &=& (r+s-m)tr - (k+s-m)(t+r-k)r \\ &=& 
tr(r-k) - (k+s-m)(r-k)r \\ &=& r(r-k)(t-k-s+m) \geq 0,
\end{eqnarray*}
since $r \geq k$ and $t \geq k+s-m$.
\end{proof}

\eSOK*

\begin{proof}
Choose any $S \in 2^{[m]}$ and let $s = |S|$. If $s=0$ or $\r=0$ there is nothing to prove, so assume $s>0$ and $\r>0$. 

By definition of $\r$, there exists $k' \leq \r$ such that $\r_{k',s} \geq \r$. Since $r_{k,s}$ is non-decreasing in $k$, we can take this $k'$ as the largest index $\leq \r$ such that $\r_{k',s} > 0$, which is $k' = \k_s$. Therefore $\k_s \leq \r \leq r_{\k_s,s}$.

By definition of $\k_s$, we have $\p_{(\k_s)} \leq a_{\k_s, s} = \a_s$. Define $\e_{i,s} = 1\{\p_{(i)} \leq a_{\k_s,s}\}/a_{\k_s,s}$. Then $\e_{i,s} \geq 1/a_{\k_s,s}$ for $i \leq \k_s$, so by \Cref{lem: minimal e}, taking the permutation of the indices into account, we have $\e_S \geq |\I_j \cap S|/j\alpha$ for all $\k_s \leq j \leq r_{\k_s, s}$. Since $\k_s \leq \r \leq r_{\k_s,s}$, this holds in particular for $j=\r$.
\end{proof}

\lemMABH*

\begin{proof}
By definition of $\b'$ we have (1.) $\p_{(i)} \leq i\alpha/m$ for at least one $i \leq \b'$; (2.) $\p_{(\b')} \leq \b'\alpha/(m-1)$;  and (3.) $\p_{(j)} > j\alpha/(m-1)$ for all $j > \b'$.

In this situation we have, for $s=m-1$, $\p_{(\b')} \leq a_{\b',m-1}$, so $\r_{\b',m-1}=\b'$; and $\p_{(j)} > a_{j,m-1}$ for $j> \b'$, so $\r_{j,m-1}=0$ for $j > \b'$. It follows that $\check\r \leq \b'$.   

For $s=m$, $\p_{(i)} \leq a_{i,m}$, so $\r_{i,m}=m$; For $s \leq m-1$ we have $\p_{(\b')} \leq a_{\b', m-1} = \b'\alpha/(m-1) \leq \b'\alpha/s = a_{\b',s}$, so $\r_{\b',s} = \b'$. Therefore $\check\r \geq \b'$. 
\end{proof}

\begin{lemma} \label{lem: aks max ma}
    $a_{k,s} \leq m\alpha$.
\end{lemma}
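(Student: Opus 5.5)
By \Cref{lem: aks nondecreasing}, $a_{k,s}$ is non-decreasing in $k$ for every fixed $s$, so it suffices to bound $a_{m,s}$ for each $s\in[m]$. We then have $a_{k,s}\le a_{m,s}$ for all $0\le k\le m$.

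By \eqref{eq: ass r greater k} with $k=m$, we must have $r_{m,s}=m$. Two cases arise.

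\emph{Case $r_{m,s}>k$.} This cannot happen for $k=m$, so at $k=m$ the second branch of \eqref{def: a} always applies.

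\emph{Case $r_{m,s}=m=k$.} Here \eqref{def: a} gives $a_{m,s}=m\alpha/s$. Since $s\ge 1$, this is at most $m\alpha$.

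Consistency with the first branch is automatic: if one formally plugs $r_{k,s}=k=m$ into the first expression, it also reduces to $m\alpha/s$, as remarked after \eqref{def: a}. Hence $a_{k,s}\le a_{m,s}=m\alpha/s\le m\alpha$ for all $k$ and $s$.

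The only thing to check carefully is that \Cref{lem: aks nondecreasing} covers the whole range $0\le k\le m$, including the step from $k=m-1$ to $k=m$ for $s$ with $k+s>m+1$. Its proof treats the case $k>m-s+1$ using Constraint \eqref{eq: ass r_k+1 upper bound}, which holds for all $k>0$, so the chain of inequalities reaches $k=m$ and no further obstacle is expected.

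Alternatively, one can bypass monotonicity and argue directly from \eqref{def: a}. When $r_{k,s}>k$ and $k+s>m$, the quantity $(k+s-m)r/(r+s-m)$ is increasing in $k$ and at most $r$, because $k+s-m\le r+s-m$. Thus $a_{k,s}\le r_{k,s}\alpha/s\le m\alpha$. In the remaining case $a_{k,s}=k\alpha/s\le m\alpha$ holds trivially. This direct argument is the one I would write, since it is a single line.
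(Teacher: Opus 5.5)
Your first argument, monotonicity in $k$ via \Cref{lem: aks nondecreasing} and then $r_{m,s}=m$ giving $a_{m,s}=m\alpha/s\le m\alpha$, is exactly the paper's proof.

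The direct argument you say you would actually write takes a genuinely different and more elementary route. In the first branch of \eqref{def: a} you have $0<k+s-m<r_{k,s}+s-m$, so $a_{k,s}<r_{k,s}\alpha/s\le m\alpha/s$. Otherwise $a_{k,s}=k\alpha/s\le m\alpha/s$. This uses only $k\le r_{k,s}\le m$ from \eqref{eq: ass r greater k} and never touches the growth constraint \eqref{eq: ass r_k+1 upper bound}, on which \Cref{lem: aks nondecreasing} depends. It would therefore remain valid for horizon matrices that violate the $b_{k,s}$ bound. The paper's route is shorter only because the monotonicity lemma is already available, and it additionally identifies $k=m$ as the maximizer. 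Both routes end at the same bound $m\alpha/s$.

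One step in your direct argument should be made explicit. The case split needs that $r_{k,s}>k$ can only occur when $k+s\ge m+1$, which follows from the initialization \eqref{eq: ass r initialization}. Without it, your ``remaining case'' would include $r_{k,s}>k$ with $k+s\le m$. There the first branch of \eqref{def: a} applies, $a_{k,s}\ne k\alpha/s$, and the ratio $(k+s-m)/(r_{k,s}+s-m)$ need not be at most $1$. The remark that the ratio is ``increasing in $k$'' is unnecessary, and not quite meaningful since $r_{k,s}$ itself changes with $k$.
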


\begin{proof}
By \Cref{lem: aks nondecreasing}, $a_{k,s}$ is maximized when $k=m$, which implies $r_{k,s} = m$, so that $a_{m,s} = m\alpha/s$. This in turn is maximized when $s=1$.
\end{proof}

\MABHadmissible*

\begin{proof}
Suppose $r_{k,s}$ is such that $\r \geq \check\r$. We wil first show that this implies that $r_{k,m-1} = k$ for all $k$, then that $r_{k,m-1} = k$ for all $k$ in turn implies that $\r =\check\r$. Choose $\alpha<1/m$.

Suppose for some $k$, $r_{k,m-1}>k$; we must have $k>2$. Define $\p_1=\alpha/m$; $\p_2=\ldots=\p_k = k\alpha/(m-1)$, and $\p_{k+1} =\ldots = \p_m=1$. Then $\check \r = k$. We have
\[
\p_{(k)} = \frac{k\alpha}m > \frac{(k+s-m)r_{k,s}\alpha}{(r_{k,s} +s-m)s} = a_{k,s}, 
\]
since $a$ is increasing in $r_{k,s}$. For all $j> k$, we have
\[
\p_{(j)} = 1 > a_{j,s},
\]
by \Cref{lem: aks max ma}. It follows that for all $j \geq k$, we have $\v_{j,m-1} \leq r_{m-1,k-1} = k-1$, so $\r \leq k-1 < \check \r$. So $\r$ cannot uniformly improve $\check\r$.

Next, suppose $r_{k,m-1}=k$ for all $k$. Then by definition of $\check\r$ we have $\p_{(\check\r)} \leq \check\r\alpha/(m-1) = a_{k,s}$ and for all $j > \check\r$, $\p_{(j)} > j\alpha/(m-1) = a_{j,s}$. Therefore, $\r_{k,m-1} = k$ when $k \leq \check\r$ and $\r_{k,m-1} =0$ when $k > \check\r$. It follows that $\r \leq \check\r$.
\end{proof}

\criterionBH*

\begin{proof}
Let $k+s > m+1$, $s<m$ and $k<m$. We have, since $a_{k,s}$ is strictly decreasing in $r_{k,s}$, that $r_{k,s} \leq ks/(m-k)$ if and only if
\[
a_{k,s} = \frac{(k+s-m)r_{k,s}\alpha}{(r_{k,s}+s-m)s} 
\geq \frac{(k+s-m)\frac{ks}{m-k}\alpha}{(\frac{ks}{m-k}+s-m)s} 
= \frac{(k+s-m)ks\alpha}{(ks + (m-k)(s-m))s} 
= \frac{k\alpha}{m}.
\]
If $k+s > m+1$, or $s<m$, or $k<m$, we have $a_{k,s} = k\alpha/s \geq k\alpha/m$ always. Therefore, $a_{k,s}  \geq k\alpha/m$ if and only if $r_{k,s} \leq ks/(m-k)$ whenever $k+s > m+1$, $s<m$ and $k<m$. 

We will now show that $a_{k,s}  \geq k\alpha/m$ if and only if $\r \geq \b$ for all $\p$ and $\alpha$.

Suppose $a_{k,s} \geq k\alpha/m$. Then $\p_{(\b)} \leq \b\alpha/m$ implies $\p_{(\b)} \leq a_{\b, s}$ for all $s$, so $\r \geq \b$. 

Suppose $a_{k,s} < k\alpha/m$ for some $k,s$. Choose $\alpha < 1/m$. Define $\p_i = k\alpha/m$ for $i \leq k$ and $\p_i=1$ for $i > k$, so $\b=k>0$. Then we have, for $i \leq k$, by \Cref{lem: aks nondecreasing}, $p_{(i)} = k\alpha/m > a_{k,s} \geq a_{i,s}$. For $i > k$, we have $\p_{(i)} = 1 > m\alpha \geq a_{i,s}$ by \Cref{lem: aks max ma}. Therefore, $\r_{i,s}=0$ for all $i$, so $\r = 0 < \b$.
\end{proof}

\lemsqrtm*

\begin{proof}
Let $\bar a_{k,s}$ and $\bar \r_{k,s}$ refer to the corresponding quantities for $\bar\r$. Let $k+s > m+1$ and $k < \sqrt{m}$. Then
\[
\frac{ks}{m-k} = \frac{k(m-k) + k(k+s-m)}{m-k} 
= k+ \frac{k^2 + s- m}{m-k}  < k + \frac{s}{m-k} \leq k+ 1. 
\]
Therefore, $\bar r_{k,s} = k$ for $k < \sqrt{m}$.

Suppose $\check\r < \sqrt{m}$. Then by \Cref{lem: MABH} there exists $i < \sqrt{m}$ such that $\p_{(i)} \leq i\alpha/m$ and $i \leq j \leq \sqrt{m}$ such that $\p_{(j)} \leq i\alpha/(m-1)$. Further, for all $k>j$, we have $\p_{(k)} > k\alpha/(m-1)$.

In this situation we have, for $s=m-1$, $\p_{(\check\r)} \leq \check \r\alpha/(m-1) = \bar a_{\check\r,m-1}$, so $\bar \r_{\check\r,m-1}=\check\r$; and $\p_{(k)} > k \alpha/(m-1) = \bar a_{k,m-1}$ for $k> \check\r$, so $\bar\r_{k,m-1}=0$ for $k > \check\r$. It follows that $\bar\r \leq \check\r$.   

For $s=m$, $\p_{(i)} \leq i\alpha/m = \bar a_{i,m}$, so $\bar\r_{i,m}=m$; For $s \leq m-1$ we have $\p_{(\check r)} \leq \check\r\alpha/(m-1) \leq \check\r\alpha/s = \bar a_{\check\r,s}$, so $\bar\r_{\check\r,s} = \check\r$. Therefore $\bar\r \geq \check\r$. 
\end{proof}

\begin{lemma} \label{lem: critical s}
If $\r>0$, there exists $s<m$ such that $\r_{\k_s,s} = \r$ and, if $\k_s < m$, that $p_{(\k_s+1)} > a_{\k_s,s}$.
\end{lemma}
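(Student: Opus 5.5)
The proof will rest on the maximality of $\r$ in \eqref{def: r}, together with the fact noted after \eqref{def: ks} that $\k_s$ is the largest location attaining the cumulative maximum, so that $\r_{\k_s,s} = \v_{\r,s} \geq \r$. I would split into the cases $\r<m$ and $\r=m$.

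\emph{Case $\r < m$.} By maximality of $\r$, the index $\r+1$ fails the criterion, so there is some $s$ with $\v_{\r+1,s} < \r+1$, i.e.\ $\v_{\r+1,s} \leq \r$.
\begin{itemize}
\item Since $\v_{\cdot,s}$ is a cumulative maximum, $\r \leq \v_{\r,s} \leq \v_{\r+1,s} \leq \r$. Hence $\v_{\r,s} = \r$, and therefore $\r_{\k_s,s} = \r$.
\item This $s$ cannot be $m$. For $k \geq 1$ we have $r_{k,m}=m$, so each $\v_{k,m}$ is either $0$ or $m$. Then $\v_{\r,m} \geq \r > 0$ forces $\v_{\r,m} = m \geq \r+1$, a contradiction. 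So $s<m$.
\end{itemize}
For the p-value condition I would split on $\k_s$.
\begin{itemize}
\item If $\k_s < \r$, then $\k_s+1 \leq \r$. By the definition \eqref{def: ks} of $\k_s$ as the largest passing index up to $\r$, we get $\p_{(\k_s+1)} > a_{\k_s+1,s}$. By \Cref{lem: aks nondecreasing}, $a_{\k_s+1,s} \geq a_{\k_s,s}$.
\item If $\k_s = \r$, use $\v_{\r+1,s} \leq \r$. This gives $r_{\r+1,s}1\{\p_{(\r+1)} \leq a_{\r+1,s}\} \leq \r$. Since $r_{\r+1,s} \geq \r+1$ by \eqref{eq: ass r greater k}, the indicator must be zero, so $\p_{(\r+1)} > a_{\r+1,s} \geq a_{\r,s}$, again by \Cref{lem: aks nondecreasing}.
\end{itemize}

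\emph{Case $\r = m$.} Here every $s$ satisfies $\v_{m,s} \geq m$, so $\r_{\k_s,s} = m = \r$ for every $s$. I would take $s = m-1$, which requires $m \geq 2$; the degenerate case $m=1$ has to be examined or excluded separately. If $\k_s < m = \r$, then $\k_s+1 \leq \r$, and the same maximality argument as in the first bullet above gives $\p_{(\k_s+1)} > a_{\k_s+1,s} \geq a_{\k_s,s}$.

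The main obstacle is showing $s \neq m$ in the first case, together with the edge case $m=1$ in the second. Both depend on the special structure $r_{k,m}=m$, which makes $\v_{\cdot,m}$ jump straight from $0$ to $m$. Everything else is bookkeeping with the cumulative maxima and the monotonicity of $a_{k,s}$ from \Cref{lem: aks nondecreasing}.
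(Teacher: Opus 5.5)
Your proof is correct and follows the paper's argument essentially step for step: you take a witness $s$ for the failure of $\r+1$, rule out $s=m$ because $r_{k,m}=m$ forces $\v_{k,m}\in\{0,m\}$ for $k\geq 1$, and split on $\k_s<\r$ versus $\k_s=\r$, using \Cref{lem: aks nondecreasing} in both branches. Your treatment of $\r=m$ spells out what the paper calls immediate, and your caveat is apt: for $m=1$ no $s<m$ exists, so the statement genuinely needs $m\geq 2$ in that case, although the lemma is only ever invoked (in \Cref{lem: curly R size}) when $\r<m$.
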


\begin{proof}
If $\r=m$, the statement is immediate. Let $\r<m$. By definition of $\r$, there exists a witness $s$ such that $\v_{\r+1,s} < \r+1$, while $\v_{\r,s} \geq \r$ (since this holds for all $s$). 

We claim that any such $s$ satisfies $s<m$. Indeed, for $k \geq 1$ we have $r_{k,m}=m$ by \eqref{eq: ass r initialization} and \eqref{eq: ass r_k+1 upper bound}, so $\r_{k,m} \in \{0,m\}$ and also $\v_{k,m} \in \{0,m\}$. Since $\v_{\r,m} \geq \r \geq 1$, we have $\v_{\r, m}=m$. Consequently, $\v_{\r+1,m} = m \geq \r+1$, which violates the condition $\v_{\r+1,s} < \r+1$ we established above for the witness $s$.

For the witness $s$, we have $\r \leq \r_{\k_s,s} < \r+1$, so $\r_{\k_s,s} = \r$. If $\k_s < \r$, we have $\r_{\k_s+1,s} =0$ since $\k_s$ is maximal; if $\k_s=\r$ we have $\r_{\k_s+1,s} =0$ since otherwise $\r+1 \leq \r_{\k_s+1,s} = \v_{\r+1,s} < \r+1$. So $\p_{(\k_s+1)} > a_{\k_s+1,s} \geq a_{k_s,s}$.
\end{proof}

\begin{lemma} \label{lem: curly R size}
If $|R| \neq \r$, then $R \notin \boldsymbol{\mathcal{R}}$.
\end{lemma}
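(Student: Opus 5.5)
The plan is to exhibit, for each nonempty $R$ with $|R| = j \neq \r$, one set $S$ with $\e_S < |R\cap S|/(j\alpha)$, so that \eqref{eq: eClosure simultaneous} fails. (For $R=\emptyset$ the statement should be read as $R\neq\emptyset$, since $\emptyset\in\RR$ trivially.) The test sets $S$ will mimic the extremal configurations in the proof of \Cref{lem: minimal e}. There the bound was tight at the endpoints $h(k)=0$ and $h(r)=0$, so stepping outside the window $[k, r_{k,s}]$ should produce a strict violation. I split into the cases $j > \r$ and $0<j<\r$.

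\emph{Case $j > \r$.} I use \Cref{lem: critical s} to get an $s<m$ with $r_{\k_s,s} = \r_{\k_s,s} = \r$ and $\p_{(\k_s+1)} > \a_s$. Since $a_{k,s}$ is non-decreasing in $k$, every p-value outside $\I_{\k_s}$ exceeds $\a_s$, so exactly the indices in $\I_{\k_s}$ pass the threshold.

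If $\k_s + s > m$, set $t = \k_s+s-m \geq 1$. Let $S$ consist of all $m-\k_s$ indices outside $\I_{\k_s}$ together with $t$ indices of $\I_{\k_s}$, chosen from $R$ as far as possible. Then
\[
\e_S = \frac{t}{s\a_s} = \frac{\r+s-m}{\r\alpha},
\]
while $|R\cap S| \geq \min(j,\ t + j - \k_s)$. In the main subcase $|R\cap S|\ge j+s-m$ we get
\[
\frac{|R\cap S|}{j} \geq \frac{j+s-m}{j} > \frac{\r+s-m}{\r},
\]
which is strict because $s<m$ and $j>\r$; this is exactly the reverse of the inequality $h(r)\ge 0$. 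In the other subcase $|R\cap S| = j$ and the bound $1/\alpha$ beats $\e_S$ trivially.

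If instead $\k_s+s\leq m$, I take $S$ entirely among the indices outside $\I_{\k_s}$, which is possible since $m-\k_s\ge s$, and include at least one element of $R\setminus\I_{\k_s}$. Such an element exists because $j>\r\ge\k_s$. Then $\e_S=0$ while the right-hand side is positive.

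\emph{Case $0<j<\r$.} Here I would aim for a violation of the $h(k)$-type: a small set $R$ cannot borrow the horizon. First try $S=R$, $s=j$. Since $j<\r\le r_{\k_j,j}$, we are in the zone $\k_j+j>m+1$, and
\[
\e_R \le \frac{j}{j\,\a_j} = \frac{\r' + j - m}{(\k_j+j-m)\,\r'\,\alpha}\cdot j, \qquad \r' = r_{\k_j,j}.
\]
We need this to be strictly less than $1/\alpha$, i.e., $\r'(m-\k_j) > j(m-j)$. This is not automatic.

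So the fallback is to take $S \supseteq R$ padded with indices whose p-values fail $\a_s$, for an $s$ chosen via \Cref{lem: critical s} or the $s$ at which $\v_{\r,s}$ is attained. The counting $|\{i\in S: \p_i\le \a_s\}| \le \k_s$ together with the monotonicity of $a_{k,s}$ from \Cref{lem: aks nondecreasing} should then force $\e_S < 1/\alpha = |R\cap S|/(j\alpha)$. I expect this case, namely finding the right $s$ and the right padding so that the inequality is strict, to be the main obstacle. The algebra will likely hinge on the exact form of $a_{k,s}$ in \eqref{def: a} and on the constraint \eqref{eq: ass r_k+1 upper bound}.
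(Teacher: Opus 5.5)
\textbf{The gap: the case $0<j<\r$ is not proved.} You observe that $S=R$ can fail and then only gesture at a padding argument, so half of the lemma is left open. Your side remark that $j<\r\le r_{\k_j,j}$ forces $\k_j+j>m+1$ is also wrong. If $\k_j+j\le m+1$ then $r_{\k_j,j}=\k_j$, and together with $\k_j\le\r\le r_{\k_j,j}$ this merely gives $\k_j=\r$, which is what always happens at $j=1$.

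The missing idea is that the violating $S$ need neither contain $R$ nor have size $j$: a singleton inside $R$ suffices. At $s=1$ all horizons are trivial, since $r_{k,1}=k$ by \eqref{eq: ass r initialization}, so $a_{k,1}=k\alpha$. Then $\v_{\r,1}\ge\r$ can only hold through $\r_{\r,1}=\r$, i.e.\ $\p_{(\r)}\le\r\alpha$. Hence $\k_1=\r$ and $\a_1=\r\alpha$, and for any $i\in R$
\[
\e_{\{i\}}\le\frac{1}{\a_1}=\frac{1}{\r\alpha}<\frac{1}{j\alpha}=\frac{|R\cap\{i\}|}{|R|\alpha}.
\]
This is the paper's argument for $|R|<\r$. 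Your ``$h(k)$-type'' intuition is right; it is simply realized at $s=1$, where the window $[\k_1,r_{\k_1,1}]$ collapses to $\{\r\}$.

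\textbf{The case $j>\r$ is correct.} With the witness $s<m$ from \Cref{lem: critical s}, exactly the indices in $\I_{\k_s}$ pass $\a_s$.
\begin{itemize}
\item For $\k_s+s>m$, your $S$ gives $\e_S=(\r+s-m)/(\r\alpha)$, also when $r_{\k_s,s}=\k_s$. Against this, $|R\cap S|\ge j+s-m$ holds in every case, so your second subcase is subsumed. Strictness follows because $x\mapsto(x+s-m)/x$ is strictly increasing for $s<m$.
\item For $\k_s+s\le m$, you rightly use a set of size $s$ avoiding $\I_{\k_s}$.
\end{itemize}
Here you are in fact more careful than the paper. The paper splits at $\k_s+s\le m+1$ and uses a singleton $\{i\}$ with $i\notin\I_\r$. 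But $\e_{\{i\}}$ is thresholded at $\a_1=\r\alpha$, not at $\a_s$, so it need not vanish. Also, for $\k_s+s=m+1$ no size-$s$ set avoids $\I_{\k_s}$. Your split, with the boundary absorbed into the formula case, works.

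One small addition: \Cref{lem: critical s} needs $\r>0$, so treat $\r=0$ separately. Then $\k_s=0$ for all $s$, so $\e_S=0$ for every nonempty $S$, and $S=R$ already violates \eqref{eq: eClosure simultaneous}.
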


\begin{proof}
Assume $R \in \boldsymbol{\mathcal{R}}$ and $|R|\neq \r$. We will arrive at a contradiction by finding an $S$ such that $\e_S < |S \cap R|/|R|\alpha$.

Suppose $|R| < \r$. Choose any $i \in R$. We have, for $S=\{i\}$,
\[
\e_{\{i\}} \leq \frac1{\a_1} = \frac1{a_{\k_1,1}} = \frac1{a_{\r,1}} =  \frac1{\r\alpha} < \frac1{|R|\alpha} = \frac{|\{i\} \cap R|}{|R|\alpha}.
\]

Suppose $|R|> \r$. Let $s<m$ be as constructed in \Cref{lem: critical s}. 

Suppose $\k_s+s \leq m+1$, then $\r = \k_s$. Choose any $i \in R\setminus \I_\r$. Then, for $S=\{i\}$, 
\[
\e_{\{i\}} = 0 < \frac{1}{|R|\alpha} = \frac{|R \cap \{i\}|}{|R|\alpha}.
\]

Suppose $\k_s+s > m+1$. Since all $\e_S$ are decreasing in $\p$, and $\r\geq \k_s$, we may assume that $R \supset \I_{\k_s}$. By \Cref{lem: critical s}, 
\[
\e_{i,s} = \frac{1\{\p_i \leq a_{\k_s,s}\}}{a_{\k_s,s}}
\]
is non-zero only for $i \in \I_{\k_s}$. Choose any $S \supset [m]\setminus \I_{\k_s}$ with $|S|=s$, which is possible since $s > m-\k_s$. Then $|S \cap \I_{\k_s}| = \k_s+s-m$, so
\[
\e_S = \frac{|S \cap \I_{\k_s}|}{sa_{\k_s,s}} = \frac{|S \cap \I_{\k_s}|}{s} \frac{(\r + s-m)s}{(\k_s+s-m)\r\alpha} = \frac{\r + s-m}{\r\alpha}.
\]
By construction of $S$, we have, since $R \supset \I_{\k_s}$,
\[
|S \cap R| = |S \cap \I_{\k_s}| + |R| - \k_s = |R|+s-m.
\]
Since $(r+s-m)/r$ is increasing in $r$ because $s<m$, we have
\[
\e_S = \frac{\r + s-m}{\r\alpha} < \frac{|R| + s-m}{|R|\alpha} = \frac{|S \cap R|}{|R|\alpha}.
\]
\end{proof}

\begin{lemma} \label{lem: sufficient curly R}
Let $|R|=\r>0$. Then $R \in \boldsymbol{\mathcal{R}}$ if and only if $|\{i\in R\colon \p_i \leq a_{\k_s,s}\}| \geq \k_s$ , for all $s$.
\end{lemma}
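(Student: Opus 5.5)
We prove the two directions separately. Throughout, $|R|=\r>0$, and $\e_{S}$ for $|S|=s$ is the average over $i\in S$ of $\e_{i,s}=1\{\p_i\le \a_s\}/\a_s$ (or $\e_S=0$ if $\k_s=0$). Write $T_s=\{i\colon \p_i\le\a_s\}$. By \Cref{lem: critical s}-type reasoning and the definition of $\k_s$, we have $\I_{\k_s}\subseteq T_s$ whenever $\k_s>0$.

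\emph{Sufficiency.} Assume $|R\cap T_s|\ge \k_s$ for every $s$. Fix $S$ with $|S|=s>0$; we must show $\e_S\ge |R\cap S|/(\r\alpha)$. If $\k_s=0$, the criterion at $s$ still holds because $\v_{\r,s}\ge\r$. Indeed, $\k_s=0$ forces $\r=0$ via the maximality in \eqref{def: ks}, so this case is vacuous.

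Otherwise, relabel indices by a permutation $\pi$ that places $\k_s$ elements of $R\cap T_s$ first and the remaining elements of $R$ next, so that $R$ becomes $[\r]$. Set $e_i=\e_{i,s}$, which is $\ge 1/a_{\k_s,s}$ for the first $\k_s$ positions. Since $\k_s\le\r\le r_{\k_s,s}$ (established in the proof of \Cref{thm: eS OK}), \Cref{lem: minimal e} applied to the relabelled e-values with $j=\r$ gives $\e_S\ge |R\cap S|/(\r\alpha)$. The key observation is that \Cref{lem: minimal e} uses only the values of the $e_i$, not the p-value ordering. This direction is therefore the same argument as \Cref{thm: eS OK}, with $\I_\r$ replaced by a relabelled $R$.

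\emph{Necessity.} Suppose that for some $s$ we have $t:=|R\cap T_s|<\k_s$. We construct a violating $S$ with $|S|=s$.

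First case: $\k_s+s\le m+1$. Then $r_{\k_s,s}=\k_s$, hence $\r=\k_s$ and $\a_s=\k_s\alpha/s$. Take $S$ to consist of $\min(s,\r-t)$ elements of $R\setminus T_s$, padded with elements outside $R\cup T_s$ where possible, or else with elements of $R\cap T_s$. Comparing $\e_S=|S\cap T_s|/(s\a_s)=|S\cap T_s|/(\k_s\alpha)$ with $|S\cap R|/(\r\alpha)$ then yields a strict violation, since $S\cap R$ contains elements outside $T_s$. The simplest instance is $s=1$ with $S=\{i\}$ for $i\in R\setminus T_s$: then $\e_S=0<1/(\r\alpha)$.

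Second case: $\k_s+s>m+1$. Mimic \Cref{lem: curly R size}. Choose $S$ to contain all of $[m]\setminus(R\cap T_s)$ together with further elements to reach size $s$. This is possible when $m-t\le s$. Then compute $\e_S=(t+s-m)/(s\a_s)$ against $|S\cap R|=\r+s-m$. Because $t<\k_s$ and $\a_s=a_{\k_s,s}$ is attained with $h(\cdot)$ from \Cref{lem: minimal e} tight at the endpoint, strictness follows from $t<\k_s$.

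If instead $m-t>s$, take $S\subseteq[m]\setminus T_s$ with $S\cap R\neq\emptyset$, which gives $\e_S=0$. This is possible since $R\setminus T_s$ is nonempty, as $t<\k_s\le\r$.

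The main obstacle is the bookkeeping in the second case: verifying the strict inequality
\[
\frac{(t+s-m)(\r+s-m)}{(\k_s+s-m)\r}<\frac{\r+s-m}{\r}.
\]
This holds when $t<\k_s$ and $\r+s-m>0$; the latter is needed and is guaranteed by $\r\ge\k_s>m-s+1$.
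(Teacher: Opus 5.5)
Your sufficiency direction is correct and matches the paper. You relabel so that $\k_s$ elements of $R\cap T_s$ come first and $R$ becomes $[\r]$, then apply \Cref{lem: minimal e} with $j=\r$, using $\k_s\le\r\le r_{\k_s,s}$.

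The necessity direction has a genuine gap. In your second case, an $S$ of size $s$ containing $[m]\setminus(R\cap T_s)$ contains all of $T_s\setminus R$, which has at least $\k_s-t\ge 1$ elements. So $|S\cap T_s|=|T_s|+s-m\ge \k_s+s-m$, not $t+s-m$. With the correct count and $\rho=r_{\k_s,s}\ge\r$, you get $\e_S\ge(\rho+s-m)/(\rho\alpha)\ge(\r+s-m)/(\r\alpha)=|S\cap R|/(\r\alpha)$. This $S$ therefore \emph{never} violates: it drops elements of $R\cap T_s$ and keeps the harmful elements of $T_s\setminus R$, which is backwards. In the running example with $R=\{1,2,3,5,6,7\}$ and $s=6$, your recipe gives $S=\{1,2,4,5,6,7\}$ and $\e_S=6/(7\alpha)>5/(6\alpha)$. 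The paper instead fills $S$ with $R\setminus T_s$, then indices outside $R\cup T_s$, then $R\cap T_s$, and takes $T_s\setminus R$ last. That gives $S=\{1,2,3,5,6,7\}$ and $\e_S=6/(7\alpha)<1/\alpha$.

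There are further problems in the case analysis:
\begin{itemize}
\item Your fallback subcase $m-t>s$ is empty, because $|[m]\setminus T_s|\le m-\k_s<s$ whenever $\k_s+s>m+1$.
\item Your first case overlooks $(\k_s,s)=(1,m)$, where $r_{1,m}=m$. More generally, no $S$ can ever violate at $s=m$, since $\e_{[m]}=|T_m|/(\k_m\alpha)\ge 1/\alpha$.
\item In the first case, once the padding forces elements of $T_s\setminus R$ into $S$, your strictness claim is unsupported.
\end{itemize}

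There is a deeper problem, shared with the paper's own proof. Your closing inequality writes $1/(s\a_s)$ with $\r$ in place of $r_{\k_s,s}$. That is justified only at a witness $s$ from \Cref{lem: critical s}, where $r_{\k_s,s}=\r$ and $T_s=\I_{\k_s}$, and there the paper's $S$ does give a strict violation. Now consider a failing $s$ with $T_s=\I_{\k_s}$, $t=\k_s-1$ and $d=m-s\ge 2$. A violating $S$ of size $s$ then exists only if $(d-1)r_{\k_s,s}<d\,\r$. The ``only if'' direction is in fact false for general BH-like methods, as the following counterexample shows.
\begin{itemize}
\item Take $m=20$ and $\alpha=0.01$.
\item Let $r_{k,s}=k$ for $s\le 16$ and for $s=18$.
\item At $s=17$, let $(r_{5,17},r_{6,17},r_{7,17},r_{8,17})=(6,12,12,12)$ and $r_{k,17}=\max(k,12)$ for $k\ge 9$.
\item At $s=19$, let $r_{3,19}=8$ and $r_{k,19}=20$ for $k\ge 4$.
\end{itemize}
All constraints hold; note $b_{5,17}=6$ and $b_{6,17}=12$. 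Set $\p_1=\p_2=\p_3=10^{-5}$, $\p_4=\dots=\p_8=0.0039$, $\p_9=0.00444$, and $\p_i=1$ otherwise.

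Then $\r=8$, with witness $s=19$ and $\k_{19}=3$. Also $\k_{17}=8$, $\a_{17}=20\alpha/51$ and $T_{17}=[8]$. For $R=\{1,\dots,7,9\}$ the condition fails at $s=17$ (and at $s=20$). Yet at $s=17$ the requirement reads $6|S\cap[8]|\ge 5|S\cap R|$, which holds for every $|S|=17$, with equality in the worst case. At $s=20$ nothing can fail. Every other $s$ satisfies the condition, so your sufficiency argument covers it. Hence $R\in\RR$.

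So no repair of a per-$s$ construction can prove the statement as written, nor the characterization in \Cref{thm: hardly simultaneous}. Only a failure at a witness index is guaranteed to exclude $R$.
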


\begin{proof}
Suppose $|\{i\in R\colon \p_i \leq a_{\k_s,s}\}| \geq \k_s$ , for all $s$. Let $\pi$ be a permutation that sorts $R$ before $[m]\setminus R$, Then define $\e_{i,s} = \frac{1\{\p_{\pi_i} \leq a_{\k_s,s}\}}{a_{\k_s,s}}$. By sorting the elements of $R$ first, and by the assumption, we fulfill the condition of \Cref{lem: minimal e} for every $s$. Therefore,
\[
\e_S \geq \frac{|R'\cap S|}{|R'|\alpha},
\]
for $S$ with $|S|=s$ and all $R' = \{\pi_1, \ldots, \pi_j\}$ for $\k_s \leq j \leq r_{\k_s,s}$. Since $\k_s \leq \r \leq r_{\k_s,s}$, this holds in particular for $R$. Therefore, $R \in \boldsymbol{\mathcal{R}}$.

Let $|\{i\in R\colon \p_i \leq a_{k_s,s}\}| < \k_s\leq \r$ for at least one $s$, which implies $s<m$. Define $A=\{i\in R\colon \p_i \leq a_{k_s,s}\}$, so $|R \setminus A|>0$. Since $K=\{i\in [m]\colon \p_i \leq a_{k_s,s}\}$ has $|K|\geq \k_s$ by definition of $k_s$, we have $|K\setminus R| \geq 1$. Choose $S$ with $|S|=s$ such that it exhausts first $R \setminus A$, then $[m]\setminus K$, then $A$ and finally $K \setminus R$. For this $S$ we have, if $\k_s+s > m$, that $|S \cap K| = \k_s+s-m$, and $|S\cap R| > \r+s-m$, so
\[
\e_S = \frac1s \sum_{i \in S} \frac{1}{a_{\k_s}}1\{\p_i \leq a_{\k_s,s}\} = \frac{|K\cap S|(\r+s-m)}{(\k_s+s-m)\r\alpha} < \frac{|R \cap S|}{|R|\alpha}.
\]
If $\k_s+s \leq m$, we have $|K\cap S| < |S\cap R|$. Therefore,
\[
\e_S = \frac1s \sum_{i \in S} \frac{1}{a_{\k_s}}1\{\p_i \leq a_{\k_s,s}\} = \frac{|S \cap K|}{\r\alpha} < \frac{|S\cap R|}{|R|\alpha}.  
\]
It follows that $R \notin \boldsymbol{\mathcal{R}}$.
\end{proof}

\hardlysimultaneous*

\begin{proof}
Combine \Cref{lem: curly R size} and \Cref{lem: sufficient curly R}.
\end{proof}

\section{Algorithm} \label{sec: alg adjusted}

Algorithm \ref{alg:cBH:adjust} gives the pseudo-code for the calculation of adjusted p-values for closed BH.

\begin{algorithm}[H]
\caption{Adjusted p-values for the general BH-like method}\label{alg:cBH:adjust}
\KwIn{$\p_1, \ldots, \p_m$: p-values; $\mathtt{next.r}$: horizon function}
\KwOut{$\q_1, \ldots, \q_m$: adjusted p-values}
 
Sort and let $\p_{(1)} \leq \cdots \leq \p_{(m)}$\;
Initialize $\q_{(k)} \gets 0$ for $k = 1, \ldots, m$\;
 
\BlankLine
\For{$s = m, m-1, \ldots, 1$}{
 
  Initialize empty deque $\mathcal{D}$\tcp*[r]{stores $(k', \; \p_{(k')}/\tilde a_{k',s})$ pairs; min at front}
  $l \gets m - s + 1$\tcp*[r]{left edge of qualifying window}
 
  \BlankLine
  \For{$k = m-s+1, \ldots, m$}{
    Compute $r_{k,s}$ and $a_{k,s}$ as in Algorithm~\ref{alg:cBH}\tcp*[r]{$r$ updated incrementally via $\mathtt{next.r}$}
    $\rho_k \gets \p_{(k)} \,/\, \tilde a_{k,s}$\;
 
    \BlankLine
    \While{$\mathcal{D} \neq \emptyset$ \textbf{and} $\mathrm{back}(\mathcal{D}).\rho \geq \rho_k$}{
      Remove back of $\mathcal{D}$\tcp*[r]{maintain deque monotonicity}
    }
    Push $(k, \rho_k)$ to back of $\mathcal{D}$\;
 
    \BlankLine
    \While{$l \leq k$ \textbf{and} $r_{l,s} < k$}{
      $l \gets l + 1$\tcp*[r]{advance window: need $r_{k',s} \geq k$}
    }
    \While{$\mathcal{D} \neq \emptyset$ \textbf{and} $\mathrm{front}(\mathcal{D}).k' < l$}{
      Remove front of $\mathcal{D}$\tcp*[r]{evict entries outside window}
    }
 
    \BlankLine
    \eIf{$\mathcal{D} = \emptyset$}{
      $\q_{(j)} \gets 1$ for all $j = k, \ldots, m$\tcp*[r]{no qualifying $k'$; window only shrinks}
      \textbf{break}\;
    }{
      $\q_{(k)} \gets \max\!\bigl(\q_{(k)}, \;\mathrm{front}(\mathcal{D}).\rho\bigr)$\tcp*[r]{sliding window minimum}
    }
  }
}
 
\BlankLine
\tcp{Post-processing}
\For(\tcp*[f]{function returns max survivor, not individual survivors}){$k = m-1, \ldots, 1$}{
  $\q_{(k)} \gets \min(\q_{(k)}, \; \q_{(k+1)})$\;
}
$\q_{(k)} \gets \max(\q_{(k)}, \; \p_{(k)})$ for $k = 1, \ldots, m$\tcp*[r]{optional: adjusted $\geq$ raw}
Restore original ordering\;
 
\Return{$\q_1, \ldots, \q_m$}\;
\end{algorithm}

\bibliography{BH} 

@article{BenjaminiHochberg1995,
  author    = {Benjamini, Yoav and Hochberg, Yosef},
  title     = {Controlling the False Discovery Rate: A Practical and Powerful Approach to Multiple Testing},
  journal   = {Journal of the Royal Statistical Society: Series B (Methodological)},
  volume    = {57},
  number    = {1},
  pages     = {289--300},
  year      = {1995},
  doi       = {10.1111/j.2517-6161.1995.tb02031.x}
}

@article{BenjaminiHochberg2000,
  author    = {Benjamini, Yoav and Hochberg, Yosef},
  title     = {On the Adaptive Control of the False Discovery Rate in Multiple Testing with Independent Statistics},
  journal   = {Journal of Educational and Behavioral Statistics},
  volume    = {25},
  number    = {1},
  pages     = {60--83},
  year      = {2000},
  doi       = {10.3102/10769986025001060}
}

@article{BenjaminiYekutieli2001,
  title={The control of the false discovery rate in multiple testing under dependency},
  author={Benjamini, Yoav and Yekutieli, Daniel},
  journal={Annals of statistics},
  pages={1165--1188},
  year={2001},
  publisher={JSTOR}
}

@article{Storey2002,
  author    = {Storey, John D.},
  title     = {A Direct Approach to False Discovery Rates},
  journal   = {Journal of the Royal Statistical Society: Series B (Statistical Methodology)},
  volume    = {64},
  number    = {3},
  pages     = {479--498},
  year      = {2002},
  doi       = {10.1111/1467-9868.00346}
}

@article{StoreyTaylorSiegmund2004,
  author    = {Storey, John D. and Taylor, Jonathan E. and Siegmund, David},
  title     = {Strong Control, Conservative Point Estimation and Simultaneous Conservative Consistency of False Discovery Rates: A Unified Approach},
  journal   = {Journal of the Royal Statistical Society: Series B (Statistical Methodology)},
  volume    = {66},
  number    = {1},
  pages     = {187--205},
  year      = {2004},
  doi       = {10.1111/j.1467-9868.2004.00439.x}
}

@article{BenjaminiKriegerYekutieli2006,
  author    = {Benjamini, Yoav and Krieger, Abba M. and Yekutieli, Daniel},
  title     = {Adaptive Linear Step-Up Procedures that Control the False Discovery Rate},
  journal   = {Biometrika},
  volume    = {93},
  number    = {3},
  pages     = {491--507},
  year      = {2006},
  doi       = {10.1093/biomet/93.3.491}
}

@article{BlanchardRoquain2008,
  author    = {Blanchard, Gilles and Roquain, Etienne},
  title     = {Two Simple Sufficient Conditions for {FDR} Control},
  journal   = {Electronic Journal of Statistics},
  volume    = {2},
  pages     = {963--992},
  year      = {2008},
  doi       = {10.1214/08-EJS180}
}

@article{BlanchardRoquain2009,
  author    = {Blanchard, Gilles and Roquain, Etienne},
  title     = {Adaptive {FDR} Control under Independence and Dependence},
  journal   = {Journal of Machine Learning Research},
  volume    = {10},
  pages     = {2837--2871},
  year      = {2009}
}

@article{Sarkar2008,
  author    = {Sarkar, Sanat K.},
  title     = {Two-Stage Stepup Procedures Controlling {FDR}},
  journal   = {Journal of Statistical Planning and Inference},
  volume    = {138},
  number    = {4},
  pages     = {1072--1084},
  year      = {2008},
  doi       = {10.1016/j.jspi.2007.03.058}
}

@article{GavrilovBenjaminiSarkar2009,
  author    = {Gavrilov, Yulia and Benjamini, Yoav and Sarkar, Sanat K.},
  title     = {An Adaptive Step-Down Procedure with Proven {FDR} Control under Independence},
  journal   = {The Annals of Statistics},
  volume    = {37},
  number    = {2},
  pages     = {619--629},
  year      = {2009},
  doi       = {10.1214/07-AOS586}
}

@article{FinnerDickhausRoters2009,
  author    = {Finner, Helmut and Dickhaus, Thorsten and Roters, Markus},
  title     = {On the False Discovery Rate and an Asymptotically Optimal Rejection Curve},
  journal   = {The Annals of Statistics},
  volume    = {37},
  number    = {2},
  pages     = {596--618},
  year      = {2009},
  doi       = {10.1214/07-AOS569}
}

@article{HeesenJanssen2016,
  author    = {Heesen, Philipp and Janssen, Arnold},
  title     = {Dynamic Adaptive Multiple Tests with Finite Sample {FDR} Control},
  journal   = {Journal of Statistical Planning and Inference},
  volume    = {168},
  pages     = {38--51},
  year      = {2016},
  doi       = {10.1016/j.jspi.2015.06.007}
}

@article{MacDonaldLiangJanssen2019,
  author    = {MacDonald, Peter and Liang, Kun and Janssen, Arnold},
  title     = {Dynamic Adaptive Procedures that Control the False Discovery Rate},
  journal   = {Electronic Journal of Statistics},
  volume    = {13},
  number    = {2},
  pages     = {3009--3024},
  year      = {2019},
  doi       = {10.1214/19-EJS1589}
}

@article{SolariGoeman2017,
  author    = {Solari, Aldo and Goeman, Jelle J.},
  title     = {Minimally Adaptive {BH}: A Tiny but Uniform Improvement of the Procedure of {Benjamini} and {Hochberg}},
  journal   = {Biometrical Journal},
  volume    = {59},
  number    = {4},
  pages     = {776--780},
  year      = {2017},
  doi       = {10.1002/bimj.201500253}
}

@article{Gao2023,
  author    = {Gao, Zijun},
  title     = {Adaptive {Storey's} Null Proportion Estimator},
  journal   = {arXiv preprint arXiv:2310.06357},
  year      = {2023},
  note      = {arXiv:2310.06357}
}

@article{GaoRoquain2025,
  author    = {Gao, Zijun and Roquain, Etienne},
  title     = {On Min-{Storey} Estimators for Multiple Testing and Conformal Novelty Detection},
  journal   = {arXiv preprint arXiv:2603.17984},
  year      = {2025},
  note      = {arXiv:2603.17984}
}

@article{IgnatiadisWangRamdas2026,
  author    = {Ignatiadis, Nikolaos and Wang, Ruodu and Ramdas, Aaditya},
  title     = {Tiny but Uniform Improvements of Adaptive {BH} Procedures via Compound E-Values},
  journal   = {arXiv preprint arXiv:2603.21424},
  year      = {2026},
  note      = {arXiv:2603.21424}
}

@article{XuSolariFischerDeHeideRamdasGoeman2025,
  title={Bringing closure to false discovery rate control: A general principle for multiple testing},
  author={Xu, Ziyu and Solari, Aldo and Fischer, Lasse and de Heide, Rianne and Ramdas, Aaditya and Goeman, Jelle},
  journal={arXiv preprint arXiv:2509.02517},
  year={2025}
}

@article{WangRamdas2022,
  title={False discovery rate control with e-values},
  author={Wang, Ruodu and Ramdas, Aaditya},
  journal={Journal of the Royal Statistical Society Series B: Statistical Methodology},
  volume={84},
  number={3},
  pages={822--852},
  year={2022},
  publisher={Oxford University Press}
}

@article{Su2018,
  title={The FDR-linking theorem},
  author={Su, Weijie J},
  journal={arXiv preprint arXiv:1812.08965},
  year={2018}
}

@article{GoemanSolari2014,
  title={Multiple hypothesis testing in genomics},
  author={Goeman, Jelle J and Solari, Aldo},
  journal={Statistics in medicine},
  volume={33},
  number={11},
  pages={1946--1978},
  year={2014},
  publisher={Wiley Online Library}
}

@book{EfronHastie2021,
  title={Computer age statistical inference, student edition: algorithms, evidence, and data science},
  author={Efron, Bradley and Hastie, Trevor},
  volume={6},
  year={2021},
  publisher={Cambridge University Press}
}

@article{FithianLei2022,
  title={Conditional calibration for false discovery rate control under dependence},
  author={Fithian, William and Lei, Lihua},
  journal={The Annals of Statistics},
  volume={50},
  number={6},
  pages={3091--3118},
  year={2022},
  publisher={Institute of Mathematical Statistics}
}

@article{LiGoeman2026,
  title={On the error control of invariant causal prediction},
  author={Li, Jinzhou and Goeman, Jelle J},
  journal={arXiv preprint arXiv:2401.03834},
  year={2026}
}

@article{FinnerRoters2001,
  title={On the false discovery rate and expected type I errors},
  author={Finner, Helmut and Roters, Markus},
  journal={Biometrical Journal},
  volume={43},
  number={8},
  pages={985--1005},
  year={2001},
  publisher={Wiley Online Library}
}

@article{KatsevichSabattiBogomolov2023,
  title={Filtering the rejection set while preserving false discovery rate control},
  author={Katsevich, Eugene and Sabatti, Chiara and Bogomolov, Marina},
  journal={Journal of the American Statistical Association},
  volume={118},
  number={541},
  pages={165--176},
  year={2023},
  publisher={Taylor \& Francis}
}

@article{EbrahimpoorGoeman2021,
  title={Inflated false discovery rate due to volcano plots: problem and solutions},
  author={Ebrahimpoor, Mitra and Goeman, Jelle J},
  journal={Briefings in bioinformatics},
  volume={22},
  number={5},
  pages={bbab053},
  year={2021},
  publisher={Oxford University Press}
}

@article{MarcusPeritzGabriel1976,
  title={On closed testing procedures with special reference to ordered analysis of variance},
  author={Marcus, Ruth and Eric, Peritz and Gabriel, K Ruben},
  journal={Biometrika},
  volume={63},
  number={3},
  pages={655--660},
  year={1976},
  publisher={Oxford University Press}
}

@article{VovkWang2021,
  author  = {Vovk, Vladimir and Wang, Ruodu},
  title   = {{E}-values: {C}alibration, combination, and applications},
  journal = {The Annals of Statistics},
  year    = {2021},
  volume  = {49},
  number  = {3},
  pages   = {1736--1754},
  doi     = {10.1214/20-AOS2020}
}

@article{GrunwaldDeHeideKoolen2024,
  author  = {Gr\"{u}nwald, Peter and de Heide, Rianne and Koolen, Wouter},
  title   = {Safe testing},
  journal = {Journal of the Royal Statistical Society Series B: Statistical Methodology},
  year    = {2024},
  volume  = {86},
  number  = {5},
  pages   = {1091--1128},
  doi     = {10.1093/jrsssb/qkae011}
}

@article{RamdasWang2025,
  author    = {Ramdas, Aaditya and Wang, Ruodu},
  title     = {Hypothesis Testing with {E}-values},
  journal   = {Foundations and Trends in Statistics},
  year      = {2025},
  volume    = {1},
  number    = {1-2},
  pages     = {1--390},
  doi       = {10.1561/3600000002}
}

@article{Sarkar2008PRDS,
  title={On methods controlling the false discovery rate},
  author={Sarkar, Sanat K},
  journal={Sankhy{\=a}: The Indian Journal of Statistics, Series A (2008-)},
  pages={135--168},
  year={2008},
  publisher={JSTOR}
}

\end{document}